\newcommand{\DATUM}{October 31, 2009}              
\theoremstyle{plain}
\newtheorem{theorem}{Theorem}
\newtheorem{proposition}[theorem]{Proposition}
\newtheorem{lemma}[theorem]{Lemma}
\theoremstyle{definition}
\newtheorem{remark}[theorem]{Remark}
\newcommand{\R}{\mathbb{R}}
\newcommand{\C}{\mathbb{C}}
\newcommand{\DETAILS}[1]{}
\newcommand{\cE}{\mathcal{E}}
\newcommand{\cL}{\mathcal{L}}         
\newcommand{\Lat}{\mathcal{L}}
\renewcommand{\Re}{\operatorname{Re}}
\renewcommand{\Im}{\operatorname{Im}}
\newcommand{\Curl}{\operatorname{curl}}
\newcommand{\Div}{\operatorname{div}}
\newcommand{\Cov}[1]{\nabla_{\!\!#1}}
\newcommand{\Null}{\operatorname{null}}
\newcommand{\Lpsi}[2]{\mathscr{L}_{#2}^{}(\tau)}
\newcommand{\Hpsi}[2]{\mathscr{H}_{#2}^{}(\tau)}
\newcommand{\LA}[2]{\vec{\mathscr{L}}_{}^{}(\tau)}
\newcommand{\HA}[2]{\vec{\mathscr{H}}_{}^{}(\tau)}
\begin{document}
\date{\DATUM}

\title{Abrikosov Lattice Solutions of the Ginzburg-Landau Equations}
\author{ T. Tzaneteas\footnote{This paper is part of the first author's PhD thesis.}
\thanks{Supported in part by Ontario graduate fellowship and by NSERC under Grant NA 7901 }\,\,\ and\   I. M. Sigal\
\thanks{Supported by NSERC Grant NA7601}
\thanks{Corresponding author; E-mail address: im.sigal@utoronto.ca } \\
\\
Dept. of Mathematics, Univ. of Toronto, Toronto,  Canada, M5S 2E4}
\maketitle

\begin{abstract}
Building on the earlier work of Odeh, Barany, Golubitsky, Turski and Lasher we give a proof of existence of Abrikosov vortex lattices in the Ginzburg-Landau model of superconductivity.\\
 Keywords: magnetic vortices, superconductivity, Ginzburg-Landau equations, Abrikosov vortex lattices, bifurcations.

\end{abstract}


\section{Introduction}
\label{sec:intro}



\textbf{1.1 The Ginzburg-Landau Model.} The Ginzburg-Landau model of superconductivity describes a superconductor contained in $\Omega \subset \R^n$, $n = 2$ or $3$, in terms of
a complex order parameter $\psi : \Omega \to \C$, and a magnetic potential $A : \Omega \to \R^n$\footnote{The Ginzburg-Landau theory is reviewed in every book on superconductivity. For reviews of rigorous results see the papers \cite{CHO, DGP} and the books \cite{SS, FS, JT, Rub}}. 
The key physical quantities for the model are
\begin{itemize}
    \item the density of superconducting pairs of electrons, $n_s := |\psi|^2$;
    \item the magnetic field, $B := \Curl A$;
    \item and the current density, $J := \Im(\bar{\psi}\Cov{A}\psi)$.
\end{itemize}
In the case $n = 2$, $\Curl A := \frac{\partial A_2}{\partial x_1} - \frac{\partial A_1}{\partial x_2}$ is a scalar-valued function.
The covariant derivative $\Cov{A}$ is defined to be $\nabla - iA$.
The Ginzburg-Landau theory specifies that a state $(\psi, A)$, in the absence of an external magnetic field, has energy
\begin{equation}
\label{eq:GL-energy}
    \mathcal{E}_\Omega(\psi, A) := \int_\Omega |\Cov{A}\psi|^2 + |\Curl A|^2 + \frac{\kappa^2}{2} (1 - |\psi|^2)^2,
\end{equation}
where $\kappa$ is a positive constant that depends on the material properties of the superconductor.

It follows from the Sobolev inequalities that for bounded open sets $\Omega$, $\mathcal{E}_\Omega$ is well-defined and $C^\infty$ as a
functional on the Sobolev space $H^1$. The critical points of this functional must satisfy the well-known Ginzburg-Landau equations
inside $\Omega$:
\begin{subequations}
\label{eq:GL-equations}
    \begin{equation}
    \label{eq:GLpsi}
      \Delta_A \psi = \kappa^2(|\psi|^2-1)\psi,
    \end{equation}
    \begin{equation}
    \label{eq:GLA}
        \Curl^*\Curl A = \Im(\bar{\psi}\Cov{A}\psi).
    \end{equation}
\end{subequations}
Here $\Delta_A=- \Cov{A}^*\Cov{A},\ \Cov{A}^*$ and $\Curl^*$ are the adjoints of $\Cov{A}$ and $\Curl$. Explicitly, $\Cov{A}^*F = -\Div F + iA\cdot F$, and
$\Curl^* F = \Curl F$ for $n = 3$ and $\Curl^* f = (\frac{\partial f}{\partial x_2}, -\frac{\partial f}{\partial x_1})$ for $n = 2$.

There are two immediate solutions to the Ginzburg-Landau equations that are homogeneous in $\psi$. These are the perfect superconductor
solution where $\psi_S \equiv 1$ and $A_S \equiv 0$, and the \textit{normal} (or non-superconducting) solution where $\psi_N = 0$ and $A_N$ is such that
$\Curl A_N =: b$ is constant.

It is well-known that there exists a
critical value $\kappa_c$ (in the units used here, $\kappa_c = 1/\sqrt{2}$), that separates superconductors into two classes with
different properties: Type I superconductors, which have $\kappa < \kappa_c$ and exhibit first-order phase transitions from the
non-superconducting state to the superconducting state, and Type II superconductors, which have $\kappa > \kappa_c$ and exhibit
second-order phase transitions and the formation of vortex lattices. Existence of the vortex lattice solutions is the subject of the present paper.



\textbf{1.2 Results.} In 1957, Abrikosov \cite{Abr} discovered solutions of \eqref{eq:GL-equations} whose physical characteristics $n_s$, $B$, and $J$ are periodic with respect to a two-dimensional lattice, while independent of the third dimension, and which have a single flux per lattice cell\footnote{Such solutions correspond cylindrical samples. In 2003, Abrikosov received the Nobel Prize for this discovery}. (In what follows we call such solutions, with $n_s$ and $B$ non-constant, \textit{lattice solutions}, or, if a lattice $\Lat$ is fixed, $\Lat$-\textit{lattice solutions}. In physics literature they are called variously mixed states, Abrikosov mixed states,  Abrikosov vortex
states.) Due to an error of calculation he concluded that the lattice which gives the minimum average energy per lattice cell is the square lattice. Abrikosov's error was corrected by Kleiner, Roth, and Autler \cite{KRA}, who showed that it is in fact the triangular lattice which minimizes the energy.

Since then these Abrikosov lattice solutions have been studied in numerous experimental and theoretical works. Of more mathematical studies, we mention the articles of Eilenberger \cite{Eil} and Lasher \cite{Lash}.

The rigorous investigation of Abrikosov solutions began soon after their discovery. Odeh \cite{Odeh} proved the existence of non-trivial minimizers and obtained a result concerning the bifurcation of solutions at the critical field strength. Barany, Golubitsky, and Tursky \cite{BGT} investigated this bifurcation for certain lattices using equivariant bifurcation theory, and Tak\'{a}\u{c} \cite{Takac} has adapted these results to study the zeros of the bifurcating solutions. 

Except for a variational result of \cite{Odeh} (see also \cite{Dutour2}), work done by both physicists and mathematicians has followed the general strategy of \cite{Abr}.

In this paper we combine and extend the previous technique to give a self-contained proof of the existence of Abrikosov lattice solutions. 
To formulate our results we mention that 
lattices $\Lat \subset \R^2$ are characterized by the area $|\Omega_\Lat|$ of the lattice cell $\Omega_\Lat$ and the shape $\tau$, given by the ratio of basis vectors identified as complex numbers (for details see Section \ref{sec:lattice states}). We will prove the following results, whose precise formulation will be given below (Theorem \ref{thm:result}).


%

\begin{theorem}
\label{thm:main-result}
	Let $\Lat$ be a lattice with $\left| |\Omega_\Lat| - \frac{2\pi}{\kappa^2} \right| \ll 1$.
	\begin{enumerate}[(I)]
	\item If $|\Omega_\Lat| > \frac{2\pi}{\kappa^2}$, there exists an $\Lat$-lattice solution. If $|\Omega_\Lat| \leq \frac{2\pi}{\kappa^2}$, then there is no $\Lat$-lattice solution in a neighbourhood of the branch of normal solutions.
	\item The above solution is close to the branch of normal solutions and is unique, up to symmetry, in a neighbourhood of this branch.
	\item The solutions above are real analytic in $|\Omega_\Lat|$ in a neighbourhood of $\frac{2\pi}{\kappa^2}$.
	\item The lattice shape for which the average energy per lattice cell is minimized approaches the triangular lattice as $|\Omega_\Lat| \to \frac{2\pi}{\kappa^2}$.
	\end{enumerate}	
\end{theorem}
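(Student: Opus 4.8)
The plan is to recast the problem as a bifurcation problem off the branch of normal solutions, parametrized by $b = \Curl A_N$ (equivalently, by $|\Omega_\Lat|$ via the single-flux-quantum condition $b|\Omega_\Lat| = 2\pi$). First I would set up the function spaces: working on the torus $\R^2/\Lat$ (or with $\Lat$-periodic physical quantities and an appropriate gauge-equivariant condition on $\psi$ and $A$), and fix the gauge, e.g. Coulomb gauge $\Div A = 0$, to kill the gauge degeneracy. The linearization of the Ginzburg-Landau map at $(\psi_N, A_N) = (0, A_N)$ decouples: the $A$-equation becomes invertible (it is $\Curl^*\Curl$ plus lower order on the periodic space, restricted to the complement of constants/harmonic part), while the $\psi$-equation linearizes to $-\Delta_{A_N}\psi = \kappa^2\psi$, i.e. the question is whether $\kappa^2$ is an eigenvalue of the magnetic Schrödinger operator $-\Delta_{A_N}$ on $\Lat$-periodic sections of the relevant line bundle. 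The spectrum of $-\Delta_{A_N}$ with constant field $b$ is the Landau levels $b(2k+1)$, $k = 0, 1, 2, \dots$, and the lowest level $b$ has multiplicity equal to the number of flux quanta, which is $1$ here. So the bifurcation happens exactly when $\kappa^2 = b = 2\pi/|\Omega_\Lat|$, i.e. at $|\Omega_\Lat| = 2\pi/\kappa^2$, and the kernel of the linearization is (complex) one-dimensional, spanned by the lowest Landau level theta-function $\psi_0$. This gives (I) (existence side), (II) (uniqueness up to the residual $U(1)$ and lattice symmetries), and (III): since everything is real-analytic, Lyapunov–Schmidt reduction plus the analytic implicit function theorem produces a branch real-analytic in the parameter $|\Omega_\Lat|$ near $2\pi/\kappa^2$.

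Next I would carry out the Lyapunov–Schmidt / bifurcation analysis explicitly. Split $\psi = s\psi_0 + \psi^\perp$ with $\psi^\perp$ in the spectral complement, solve for $\psi^\perp$ and $A$ in terms of $s$ and the parameter by the implicit function theorem (the complement operator is invertible), substitute back, and obtain the bifurcation equation for $s$. By the $U(1)$ symmetry $\psi \mapsto e^{i\theta}\psi$ the reduced equation has the form $s\,g(|s|^2, |\Omega_\Lat|) = 0$ with $g$ real-analytic, $g(0, 2\pi/\kappa^2) = 0$, and $\partial_{|\Omega_\Lat|} g \neq 0$ at that point (this last nonvanishing is the transversality of the Landau level crossing — it must be checked by a direct computation of the derivative of the eigenvalue, and I expect it to come out with a definite sign). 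Then $g = 0$ solves for $|s|^2$ as an analytic function of $|\Omega_\Lat|$, and the sign of the relevant coefficient (the "Abrikosov constant" appearing below, essentially $\beta_\Lat = \langle |\psi_0|^4\rangle / \langle|\psi_0|^2\rangle^2$, which is $> 1$) forces $|s|^2 > 0$ precisely on one side, namely $|\Omega_\Lat| > 2\pi/\kappa^2$; on the other side the only solution near the branch is $s = 0$, i.e. the normal solution. This finishes (I) and (II).

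For (IV) I would compute the energy per cell of the bifurcating solution as an expansion in $s$ (hence in $\delta := |\Omega_\Lat| - 2\pi/\kappa^2$), in which the $\Lat$-dependence enters at leading nontrivial order only through $\beta_\Lat$: the average energy per cell takes the form (constant independent of $\Lat$) $+$ (negative multiple of $\delta^2$)$/\beta_\Lat$ $+ o(\delta^2)$, so minimizing the energy over lattices of fixed area is equivalent, to leading order, to \emph{minimizing} $\beta_\Lat$ over the shape parameter $\tau$ in the fundamental domain of the modular group. The final step is the classical fact that $\beta_\Lat$ as a function of $\tau$ attains its minimum at the hexagonal point $\tau = e^{i\pi/3}$; this is where I would either cite the analysis going back to Kleiner–Roth–Autler and the theta-function computations of Nonnenmacher–Voros / Eilenberger, or reprove it via the representation of $\beta_\Lat$ through the Eisenstein/theta series and known extremality of the hexagonal lattice. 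Combining, the energy-minimizing shape converges to the triangular lattice as $\delta \to 0$, which is (IV).

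\medskip

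\noindent\textbf{Main obstacle.} The conceptual setup (bundle, gauge-fixing, Landau levels, Lyapunov–Schmidt) is standard, and the transversality computation, while it must be done, is routine. The real work is twofold: first, controlling the $A$-component and the $\psi^\perp$-component well enough that the reduced equation's leading nonlinear coefficient is genuinely $\beta_\Lat$ (up to a positive constant) with no spurious $\Lat$-dependence and no hidden contribution of the wrong sign — this requires honest estimates on the solution of the $A$-equation sourced by $|\psi_0|^2$ and its feedback into the $\psi$-equation. Second, for (IV), the genuinely hard analytic fact is the minimization of $\beta_\Lat$ over lattice shapes: this is a nontrivial statement about theta functions on the modular curve and is the step I would most carefully either cite or devote the most space to.
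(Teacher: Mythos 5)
Your proposal follows essentially the same route as the paper: rescale/parametrize by the flux $b$, fix the gauge, eliminate the vector potential via the invertibility of $\Curl^*\Curl$ on mean-zero divergence-free periodic fields, identify the kernel of the linearization as the one-dimensional lowest Landau level at $b=\kappa^2$, run an equivariant Lyapunov--Schmidt reduction with the $U(1)$ symmetry to get the branch analytic in $|s|^2$ with existence only on the side $|\Omega_\Lat|>2\pi/\kappa^2$, and then expand the energy per cell to order $\mu^2$ where the shape enters only through $\beta(\tau)$, whose minimality at $\tau=e^{i\pi/3}$ is cited. The only point worth flagging is that the $\tau$-dependent energy coefficient is actually $-\bigl(1+4\pi(\kappa^2-\tfrac12)\beta(\tau)\bigr)^{-1}$ rather than a negative multiple of $1/\beta$, so the reduction of energy minimization to minimization of $\beta$ genuinely requires $\kappa^2>\tfrac12$ (it would reverse direction for $\kappa^2<\tfrac12$), a hypothesis your sketch leaves implicit.
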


\begin{remark}
	\hfill
	\begin{enumerate}[(a)]
	\item \cite{Odeh, Dutour} showed that for all $|\Omega_\Lat| > \frac{2\pi}{\kappa^2}$ there exists a global minimizer of $\mathcal{E}_{\Omega_\Lat}$.
	\item \cite{Odeh, BGT} proved results related to our solutions in (I).
	\item \cite{Lash} proved partial results on (IV).
	\end{enumerate}
\end{remark}

Among related results, a relation of the Ginzburg-Landau minimization problem, for a fixed, finite domain and for increasing Ginzburg-Landau parameter $\kappa$ and external magnetic field, to the Abrikosov lattice variational problem was obtained in \cite{AS, Al2}.   \cite{Dutour}  (see also \cite{Dutour2}) have found boundaries between superconducting, normal and mixed phases.

 All the rigorous results above deal with Abrikosov lattices with one quantum of magnetic flux per lattice cell. partial results for higher magnetic fluxes were proven in \cite{Ch, Al}. This problem will be addressed in our subsequent paper.

\noindent \textbf{Acknowledgements} \\
The second author is grateful to  Yuri Ovchinnikov for many fruitful discussions. A part of this work was done during I.M.S.'s stay at the IAS, Princeton. 


\section{Properties of the  Ginzburg-Landau Equations}
\label{sec:problem}




\textbf{2.1 Symmetries.} The Ginzburg-Landau equations exhibit a number of symmetries, that is, transformations which map solutions to solutions. The most important of these
symmetries is the gauge symmetry, defined for any sufficiently regular function $\eta : \Omega \to \R$, which maps
$(\psi, A) \mapsto (T_\eta\psi, T_\eta A)$, where
\begin{equation}
\label{eq:gauge-symmetry}
    T_\eta\psi = e^{i\eta}\psi, \qquad T_\eta A = A + \nabla\eta.
\end{equation}
There are also the translation symmetry, defined for each $t \in \R^2$, which maps $(\psi, A) \mapsto (T_t\psi, T_t A)$, where
\begin{equation}
\label{eq:translation-symmetry}
    T_t\psi(x) := \psi(x + t), \qquad\qquad T_t A(x) := A(x + t),
\end{equation}
and rotation and reflection symmetry, defined for each $R \in O(2)$ (the set of orthogonal $2 \times 2$ matrices), which maps
$(\psi, A) \mapsto (T_R\psi, T_R A)$, where
\begin{equation}
\label{eq:rotation-reflection-symmetry}
    T_R\psi(x) := \psi(Rx), \qquad\qquad T_R A(x) := R^{-1}A(Rx).
\end{equation}



\textbf{2.2 Flux Quantization.} One can show that under certain boundary conditions (e.g., 'gauge-periodic', see below, or if $\Omega = \R^2$ and $\mathcal{E}_\Omega < \infty$) 
the magnetic flux through $\Omega$ is quantized.


\section{Lattice States}
\label{sec:lattice states}

Our focus in this paper is on states $(\psi, A)$ defined on all of $\R^2$, but whose physical properties, the density of superconducting pairs of electrons, $n_s := |\psi|^2$, the magnetic field, $B := \Curl A$, and the current density, $J := \Im(\bar{\psi}\Cov{A}\psi)$, are doubly-periodic with respect
to some lattice $\cL$. We call such states $\cL-$\emph{lattice states}.

One can show that  a state $(\psi, A) \in H^1_{loc}(\R^2;\C) \times H^1_{loc}(\R^2;\R^2)$ is a $\mathcal{L}$-lattice state if and only if translation by an element of the lattice results in a gauge transformation
    of the state, that is, for each $t \in \mathcal{L}$, there exists a function $g_t \in H^2_{loc}(\R^2;\R)$
    such that $$\psi(x + t) = e^{ig_t(x)}\psi(x)\ \mbox{and}\ A(x+t) = A(x) + \nabla g_t(x)$$ almost everywhere.
%
%
\DETAILS{With reference to the definition given in section \ref{sec:problem}, we note that any state that is a lattice state with respect to
this definition will have doubly periodic physical properties, and, as a partial converse, given a state whose physical
properties are doubly periodic, then there is a state that is a lattice state according to this definition and that shares those physical properties.}

It is clear that the gauge, translation, and rotation symmetries of the Ginzburg-Landau equations map lattice states to
lattice states. In the case of the gauge and translation symmetries, the lattice with respect to which the solution is
periodic does not change, whereas with the rotation symmetry, the lattice is rotated as well. It is a simple calculation
to verify that the magnetic flux per cell of solutions is also preserved under the action of these symmetries.

Note that $(\psi, A)$ is defined by its restriction to a single cell and can be reconstructed from this restriction by lattice translations.


\textbf{3.1 Flux quantization.}  The important property of lattice states is that the magnetic flux through a lattice cell is quantized:
\begin{equation}
\label{eq:flux-per-cell}
    \Phi(A) := \int_\Omega \Curl A = 2\pi n
\end{equation}
for some integer $n$. Here $\Omega$ is any fundamental cell of the lattice.
Indeed, if $|\psi| > 0$ on the boundary of the cell, we can write 
$\psi = |\psi|e^{i\theta}$ and $0 \leq \theta < 2\pi$. The periodicity of $n_s$ and $J$ ensure the periodicity of $\nabla\theta - A$ and therefore by Green's theorem, $\int_\Omega \Curl A = \oint_{\partial\Omega} A = \oint_{\partial\Omega} \nabla\theta$ and this function is equal to $2\pi n$ since $\psi$ is single-valued.

We let $b$ be the average magnetic flux per lattice cell, $b := \frac{1}{|\Omega|} \Phi(A)$. Equation \eqref{eq:flux-per-cell} then imposes a condition on the area of a cell, namely,
\begin{equation*}
	|\Omega| = \frac{2\pi n}{b}.
\end{equation*}
Due to the physical interpretation of $b$ as being related to the applied magnetic field, from now on we use $b = \frac{2\pi n}{|\Omega|}$ as a parameter of our problem. We note that due to the reflection symmetry of the problem we can assume that $b \geq 0$.



\textbf{3.2 Lattice Shape.} In order to define the shape of a lattice, we identify $x \in \R^2$ with $z = x_1 + ix_2 \in \C$, and view $\mathcal{L}$ as a subset of $\C$. It is a well-known fact (see \cite{Ahlfors}) any lattice $\mathcal{L} \subseteq \C$ can be given a basis ${r, r'}$ such
that the ratio $\tau = \frac{r'}{r}$ satisfies the inequalities:
\begin{enumerate}[(i)]
\item $|\tau| \geq 1$.
\item $\Im\tau > 0$.
\item $-\frac{1}{2} < \Re\tau \leq \frac{1}{2}$, and $\Re\tau \geq 0$ if $|\tau| = 1$.
\end{enumerate}
Although the basis is not unique, the value of $\tau$ is, and we will use that as a measure of the shape of the lattice.

Using the rotation symmetry we can assume that if $\mathcal{L}$ has has as a basis $\{\, re_1, r\tau \,\}$, where $r$ is a positive real number and $e_1 = (1,0)$.



\textbf{3.3 Fixing the Gauge.}  The gauge symmetry allows one to fix solutions to be of a desired form. We will use the following preposition, first used by \cite{Odeh} and proved in \cite{Takac}. We provide an alternate proof in Appendix \ref{sec:alternate-proof}.

\begin{proposition}
    \label{thm:fix-gauge}
    Let $(\psi, A)$ be an $\mathcal{L}$-lattice state, and let $b$ be the average magnetic flux per cell. Then there is
    a $\mathcal{L}$-lattice state $(\phi, A_0 + a)$ that is gauge-equivalent to a translation of $(\psi, A)$, where
    $A_0(x) = \frac{b}{2}x^\perp$ (where $x^\perp = (-x_2, x_1)$), and $\phi$ and $a$ satisfy the following conditions.
    \begin{enumerate}
    \item[(i)] $a$ is doubly periodic with respect to $\mathcal{L}$: $a(x + t) = a(x)$ for all $t \in \mathcal{L}$.
    \item[(ii)] $a$ has mean zero: $\int_\Omega a = 0$.
    \item[(iii)] $a$ is divergence-free: $\Div a = 0$.
    \item[(iv)] $\phi(x + t) = e^{\frac{ib}{2}t\wedge x}\phi(x)$, where $t\wedge x = t_1x_2 - t_2x_1$, for $t = re_1, r\tau$.
    \end{enumerate}
\end{proposition}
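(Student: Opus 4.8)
The plan is to construct the gauge transformation in two stages: first exhibit the decomposition $A = A_0 + a$ with $a$ periodic, then further adjust $a$ by a periodic gauge to make it mean-zero and divergence-free, tracking the effect on $\psi$ throughout.

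First I would observe that $A_0(x) = \frac{b}{2}x^\perp$ satisfies $\Curl A_0 = b$, so $A - A_0$ has curl equal to $\Curl A - b$, which is doubly periodic with mean zero over a cell (by the flux quantization \eqref{eq:flux-per-cell} and the definition of $b$). For a lattice state, the cocycle relations $A(x+t) = A(x) + \nabla g_t(x)$ combined with $A_0(x+t) = A_0(x) + \frac{b}{2}t^\perp\cdot(\text{const}) $—more precisely $A_0(x+t) - A_0(x) = \frac{b}{2}t^\perp$, a constant—show that $a_1 := A - A_0$ transforms under $t\in\Lat$ by $a_1(x+t) = a_1(x) + \nabla g_t(x) - \frac{b}{2}t^\perp$. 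I would then argue that, after a translation of the whole state (to fix the base point so that a certain harmonic representative vanishes) and a periodic gauge transformation $\eta$ solving $\Delta\eta = -\Div a_1$ on the torus (solvable since $\Div a_1$ has mean zero, as $a_1$ is curl-controlled and the problematic boundary terms cancel by periodicity of $B$), one reduces the jump term; the Hodge/Helmholtz decomposition on the flat torus $\R^2/\Lat$ writes the periodic part of $a_1$ as $\nabla\eta + a$ with $a$ divergence-free and harmonic part removable by the mean-zero condition. Concretely: decompose $a_1 = \nabla\eta + \Curl^* \xi + h$ with $h$ the (constant) harmonic piece, absorb $\nabla\eta$ into the gauge, absorb $h$ by choosing the translation $t$ appropriately (translation shifts $A_0$ by a constant $\frac{b}{2}t^\perp$, which can cancel any constant vector since $b\neq 0$), and set $a := \Curl^*\xi$, which is automatically divergence-free, periodic, and—after subtracting its mean, which can again be folded into the translation/constant-gauge freedom—mean zero. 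Conditions (i), (ii), (iii) then hold by construction.

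For (iv), I would track how $\phi := e^{i\eta}T_t\psi$ transforms. Starting from $\psi(x+t') = e^{ig_{t'}(x)}\psi(x)$ for $t'\in\Lat$ and combining with the explicit form of the gauge function $\eta$ and the translation, the multiplier for $\phi$ becomes $e^{i\chi_{t'}(x)}$ where $\chi_{t'}$ is determined by requiring consistency with $A_0 + a$ being the new potential: since $a$ is periodic, $\nabla\chi_{t'}$ must equal $A_0(x+t') - A_0(x) = \frac{b}{2}(t')^\perp$, a constant, so $\chi_{t'}(x) = \frac{b}{2}(t')^\perp\cdot x + c_{t'} = \frac{b}{2}\,t'\wedge x + c_{t'}$ up to a constant. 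The remaining work is to show the constants $c_{t'}$ can be eliminated for the two generators $t' = re_1, r\tau$ — this is where the residual constant-gauge freedom (adding a constant to $\eta$) and a possible further translation are spent, and one must check the cocycle condition $\chi_{t_1+t_2} = \chi_{t_1} + T_{t_1}\chi_{t_2}$ (mod $2\pi$) is compatible, which forces a relation among the $c_{t'}$ consistent with flux quantization $\frac{b|\Omega|}{2\pi} = n \in \Z$.

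The main obstacle I expect is bookkeeping the interplay of the three freedoms — periodic gauge, constant gauge, and translation — to \emph{simultaneously} kill the harmonic part of $a$, the mean of $a$, and the phase constants $c_{t'}$ in (iv), while verifying that the cocycle (consistency) conditions for the multipliers are not violated; in particular one must confirm that the single translation and single additive constant suffice for both generators, which ultimately rests on the integrality in \eqref{eq:flux-per-cell}. The elliptic solvability ($\Delta\eta = -\Div a_1$ on the torus) and the Hodge decomposition are standard and I would cite them rather than prove them; likewise the regularity claim $g_t \in H^2_{loc}$ propagating to $\eta, \chi_{t'}$ smooth enough follows from elliptic regularity since the data are as regular as $A,\psi$ allow.
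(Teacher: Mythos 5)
Your overall strategy --- peel off $A_0$, make the remainder periodic, divergence-free and mean-zero by gauge transformations, then spend a translation on the phase constants in (iv) --- is the same skeleton as the paper's proof (Appendix \ref{sec:alternate-proof}). But there is a gap at the first, and hardest, step. The field $a_1 = A - A_0$ is \emph{not} periodic; as you yourself note, it satisfies $a_1(x+t) = a_1(x) + \nabla g_t(x) - \frac{b}{2}t^\perp$. Consequently neither the torus Poisson equation $\Delta\eta = -\Div a_1$ nor the Hodge decomposition on $\R^2/\Lat$ applies to $a_1$ as written: $\Div a_1 = \Div A$ changes by $\Delta g_t$ under lattice translations (and $g_t$ need not be harmonic), so it is not a function on the torus, and ``the periodic part of $a_1$'' is undefined until one has already produced a periodic representative --- which is precisely the content of (i). The correct move, and what the paper does explicitly, is to start from the quantity that \emph{is} periodic, namely $\Curl A - b$ (periodic with zero cell average by \eqref{eq:flux-per-cell} and the definition of $b$), construct a periodic field $P$ with $\Curl P = \Curl A - b$ (the paper writes $P$ down via line integrals of $\Curl A$; equivalently one may solve $\Delta\xi = -(\Curl A - b)$ on the torus and set $P = \Curl^*\xi$), and only then observe that $A - A_0 - P$ is curl-free on the simply connected $\R^2$, hence equal to $\nabla\eta'$ for a globally defined $\eta'$. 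After that your periodic-gauge / divergence / mean bookkeeping makes sense.

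On that bookkeeping, which you flag as the main obstacle but leave unresolved: note first that the constant-gauge freedom is inert --- adding a constant to $\eta$ changes neither $A$ nor the multipliers $g_t$ --- so it cannot be ``spent'' on anything. The correct allocation is: a periodic gauge $\eta''$ with $\Delta\eta'' = -\Div P$ kills the divergence; a \emph{linear} gauge $\eta''' = C\cdot x$ kills the mean (equivalently the constant harmonic part); and the two constants $C_{re_1}$, $C_{r\tau}$ in (iv) are killed by a translation by $l$ accompanied by the gauge $\zeta(x) = \frac{b}{2}\,l\wedge x$. This last pair preserves (i)--(iii), since $a(\cdot+l)$ is still periodic, divergence-free and mean-zero, while shifting each $C_t$ by $\frac{b}{2}\,t\wedge l$; the map $l \mapsto (re_1\wedge l,\ r\tau\wedge l)$ has determinant $r^2\Im\tau \neq 0$, so any pair of constants can be removed. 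Thus the three freedoms do not conflict, but this has to be exhibited in the stated order rather than asserted, and no appeal to the integrality of the flux is needed for it.
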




\textbf{3.3 Lattice Energy.} Lattice states clearly have infinite total energy, so we will instead
consider the average energy per cell,  defined by
\begin{equation}
\label{eq:GL-energy-per-cell}
    E(\psi, A) := \frac{1}{|\Omega|} \mathcal{E}_\Omega(\psi, A).
\end{equation}
Here, $\Omega$ is a primitive cell of the lattice with respect to which $(\psi, A)$ is a lattice state and $|\Omega|$ is its Lebesgue measure.
We seek minimizers of this functional under the condition that the average magnetic flux per lattice cell is fixed:
$\frac{1}{|\Omega|}\Phi(A) = b$.

In terms of the minimization problem, we see that the perfect superconductor is a solution only when $\Phi(A)$ is
fixed to be $\Phi(A)= 0$. On the other hand, there is a normal solution,  $(\psi_N = 0,\ A_N,\ \Curl A_N =$ constant), for any condition on $\Phi(A)$.

We define the energy of the lattice with the flux $n$ per cell as
\begin{equation}
\label{eq:GL-energy-of-lattice}
    \cE_n (\cL ) := \inf E(\psi, A),
\end{equation}
where the infimum is taken over all smooth $\Lat$-lattice states satisfying
(i) through (iv) of Proposition \ref{thm:fix-gauge}.


\textbf{3.4 Result. Precise Formulation.} The following theorem gives the precise formulation of
Theorem \ref{thm:main-result} from the introduction.
\begin{theorem}
\label{thm:result}
    Let $n = 1$.
    \begin{enumerate}[(I)]
	\item For every $b$ sufficiently close to but less than the critical value $b_c = \kappa^2$, there exists an $\cL-$lattice solution of the Ginzburg-Landau equations with one quantum of flux per cell and with average magnetic flux per cell equal to $b$.
	\item This solution is unique, up to the symmetries, in a neighbourhood of the normal solution.
	\item The family of these solutions is real analytic in $b$ in a neighbourhood of $b_c$.
	\item If $\kappa^2 > 1/2$, then the global minimizer $\Lat_b$ of the average energy per cell, $\cE_1(\mathcal{L})$, approaches the $\Lat_{triangular}$ as $b \to b_c$ in the sense that the shape $\tau_b $ approaches $ \tau_{triangular} = e^{i\pi/3}$ in $\C$.
	\end{enumerate}
\end{theorem}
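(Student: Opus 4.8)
The statement is a bifurcation result off the branch of normal solutions, and I would prove it by a Lyapunov--Schmidt reduction followed by an asymptotic analysis of the reduced equation and of the energy. Fix the gauge as in Proposition~\ref{thm:fix-gauge}, so that a lattice state is a pair $(\psi,a)$ with $A=A_0+a$, $A_0=\tfrac b2 x^\perp$, $\psi$ satisfying condition (iv) and $a$ periodic, mean zero and divergence free; the normal solution is then $(\psi,a)=(0,0)$ for every $b$, with field $\Curl A_0=b$. Set up the Ginzburg--Landau equations \eqref{eq:GL-equations} as a map $F(\psi,a,b)=0$ between suitable Sobolev spaces (e.g.\ $H^2$ over a fundamental cell with the appropriate periodicity/boundary conditions in each slot); $F$ is polynomial, hence real analytic, in $(\psi,a)$ and in $b$, and $F(0,0,b)=0$. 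The partial differential $D_{(\psi,a)}F(0,0,b)$ is block diagonal: on the field slot it is $\Curl^*\Curl$, an isomorphism on periodic divergence-free mean-zero fields (there it is essentially $-\Delta$, with trivial kernel); on the order-parameter slot it is $\Delta_{A_0}+\kappa^2$. On functions obeying (iv) the spectrum of $-\Delta_{A_0}$ consists of the Landau levels $(2k+1)b$, $k\ge0$, and --- this is exactly where the restriction $n=1$ enters --- the lowest level $b$ is simple, with normalized eigenfunction $\psi_0=\psi_0(\tau,b)$ of lowest-Landau-level type (a Gaussian times a theta function). Hence $b_c=\kappa^2$ is precisely the value of $b$ at which $D_{(\psi,a)}F$ is not invertible, and there $\ker D_{(\psi,a)}F(0,0,b_c)=\C\,\psi_0$.

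\textbf{Reduction; parts (I)--(III).} Writing $\psi=s\psi_0+\varphi$ with $\varphi\perp\psi_0$, the implicit function theorem applied to the field equation and to the $\varphi$-component of the order-parameter equation (both relevant linearizations being invertible) yields real-analytic maps $a=a(s,b)=O(|s|^2)$ and $\varphi=\varphi(s,b)=O(|s|^3)$; projecting the remaining equation onto $\psi_0$ gives a single scalar bifurcation equation. Using $-\Delta_{A_0}\psi_0=b\psi_0$ for all $b$ and computing the leading nonlinear term --- which receives contributions from $\kappa^2|\psi|^2\psi$ and from the induced field $a(s,b)$, the latter evaluated via the lowest-Landau-level identity $\Im(\bar\psi_0\Cov{A_0}\psi_0)=\tfrac12\nabla^\perp|\psi_0|^2$ --- the equation takes the form
\begin{equation*}
 (b-\kappa^2)\,\langle|\psi_0|^2\rangle \;=\; -\tfrac12\big[(2\kappa^2-1)\langle|\psi_0|^4\rangle+\langle|\psi_0|^2\rangle^2\big]\,|s|^2\;+\;O(|s|^4),
\end{equation*}
where $\langle\cdot\rangle$ is the average over a cell. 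The $U(1)$ gauge symmetry forces the bifurcation equation to be equivariant under $s\mapsto e^{i\theta}s$, so one may take $s\ge0$ real. When the bracket is positive (in particular whenever $\kappa^2\ge\tfrac12$), the analytic implicit function theorem then gives, for $b$ near $b_c$, exactly the solutions $s=0$ and one branch $s=s(b)$ with $s(b)^2$ real analytic in $b$ and equal to a positive multiple of $\kappa^2-b$ to leading order, hence positive precisely for $b<b_c$. Undoing the reduction produces the $\mathcal L$-lattice solution of (I); the fact that the entire small-solution set near $(0,0,b)$ consists of $s=0$ and $s=s(b)$, together with the gauge and translation symmetries, gives the uniqueness in (II); analyticity of all the maps involved, of $b\mapsto s(b)^2$, and of $\psi,a$ in $(s,s^2)$ gives (III).

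\textbf{Part (IV).} Expanding the average energy along the branch, using $\Curl A=b+\Curl a$, $\int_\Omega\Curl a=0$, and $\int_\Omega|\Cov{A_0}\psi|^2=\langle\psi,-\Delta_{A_0}\psi\rangle$, one gets
\begin{equation*}
 E(\psi,A)-E(0,A_0)\;=\;(b-\kappa^2)\,\langle|\psi_0|^2\rangle\,|s|^2\;+\;\tfrac14\big[(2\kappa^2-1)\langle|\psi_0|^4\rangle+\langle|\psi_0|^2\rangle^2\big]\,|s|^4\;+\;O(|s|^6),
\end{equation*}
and inserting $|s|^2=s(b)^2$ from the bifurcation equation leaves, to leading order,
\begin{equation*}
 E(\psi,A)-E(0,A_0)\;=\;-\,\frac{(\kappa^2-b)^2\,\langle|\psi_0|^2\rangle^2}{(2\kappa^2-1)\langle|\psi_0|^4\rangle+\langle|\psi_0|^2\rangle^2}\;+\;o\big((\kappa^2-b)^2\big).
\end{equation*}
For $\kappa^2>\tfrac12$ the denominator is positive and the energy gain is a positive multiple of $(\kappa^2-b)^2/\big((2\kappa^2-1)\beta_\Lat+1\big)$, where $\beta_\Lat(\tau):=\langle|\psi_0|^4\rangle/\langle|\psi_0|^2\rangle^2$ is the Abrikosov ratio (scale-invariant, hence a function of $\tau$ only). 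Thus, as $b\to b_c$, minimizing the average energy per cell over the lattice shape is equivalent to minimizing $\beta_\Lat(\tau)$. I would then express $\beta_\Lat(\tau)$ in closed form through Jacobi theta functions and invoke the classical fact that it attains its unique minimum over the modular fundamental domain at $\tau=e^{i\pi/3}$ --- this follows from known minimization results for lattice theta functions / Epstein zeta functions; since the minimum is strict and nondegenerate and the rescaled energy converges uniformly near $\tau_{\mathrm{triangular}}$ to a multiple of $\beta_\Lat$, the global minimizer $\tau_b$ must converge to $e^{i\pi/3}$.

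\textbf{Main obstacle.} Two points carry the weight. The first is the spectral and analytic bookkeeping at the degenerate parameter $b_c$: proving that the lowest Landau level is simple under (iv) with one flux quantum, identifying $\psi_0$ with its theta-function description, and tracking the induced field $a(s,b)$ in the reduced equation uniformly in $\tau$. The second --- the genuinely hard analytic core of (IV) --- is the minimization statement for $\beta_\Lat(\tau)$: the energy expansion cleanly reduces the geometric assertion about triangular lattices to this number-theoretic fact about theta functions, but the fact itself is delicate, and there I would either invoke the sharpest available result or reproduce its proof.
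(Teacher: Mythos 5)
Your proposal follows essentially the same route as the paper: fix the gauge, eliminate the field equation by inverting $\Curl^*\Curl+|\psi|^2$, perform a Lyapunov--Schmidt reduction onto the one-dimensional lowest Landau level (this is where $n=1$ enters), solve the $U(1)$-equivariant scalar bifurcation equation for $|s|^2$ as an analytic function of $b$, and for (IV) expand the energy to order $(\kappa^2-b)^2$ to reduce the shape optimization to minimizing the Abrikosov ratio $\beta(\tau)$, whose minimization at $\tau=e^{i\pi/3}$ is quoted from the theta-function literature; your leading coefficient $\tfrac12[(2\kappa^2-1)\langle|\psi_0|^4\rangle+\langle|\psi_0|^2\rangle^2]$ agrees with the paper's $g_\lambda'(0)$ after accounting for the cell normalization $|\Omega^\tau|=2\pi$. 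The only differences are cosmetic (the paper first rescales to a $b$-independent cell and uses $\lambda=\kappa^2/b$ as bifurcation parameter), plus the minor point that you correctly flag that supercriticality of the branch requires positivity of that bracket, which the paper asserts unconditionally.
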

The rest of this papers is devoted to the proof of this theorem.



\section{Rescaling}
\label{sec:reduced-problem}

In this section we rescale $(\psi, A)$ to eliminate the dependence of the size of the lattice on $b$. Our space will then depend only on the number of quanta of flux and the shape of the lattice.

Suppose, then, that we have a $\mathcal{L}$-lattice state $(\psi, A)$, where $\mathcal{L}$ has shape $\tau$. Now let $b$ be the average magnetic flux per cell of the state and $n$ the quanta of flux per cell. From the quantization of the flux, we know that
\begin{equation*}
    b = \frac{2\pi n}{|\Omega|} = \frac{2\pi n}{r^2 \Im\tau},
\end{equation*}
We set
$\sigma := \left(\frac{n}{b}\right)^{\frac{1}{2}}$. The last two relations give $\sigma = \left(\frac{\Im\tau}{2\pi}\right)^{\frac{1}{2}} r$. We now define the rescaling $(\hat{\psi}, \hat{A})$ to be
\begin{equation*}
    (\hat{\psi}(x), \hat{A}(x)) := ( \sigma \psi(\sigma x), \sigma A(\sigma x) ).
\end{equation*}
Let $\mathcal{L}^\tau$ be the lattice spanned by $r^\tau$ and $r^\tau\tau$, with $\Omega^\tau$ being a primitive cell of that lattice. Here 
    \begin{equation}
    \label{eq:rtau}
        r^\tau := \left( \frac{2\pi}{\Im\tau} \right)^{\frac{1}{2}}.
    \end{equation}
    We note that $|\Omega^\tau| = 2\pi n$.
    We summarize the effects of the rescaling above: 

    \begin{enumerate}[(i)]

    \item $(\hat{\psi}, \hat{A})$ is a $\mathcal{L}^\tau$-lattice state.

    \item   $E(\psi,A) = \hat{\mathcal{E}}_{\lambda}(\hat{\psi},\hat{A})$, where $\lambda = \frac{\kappa^2 n}{b}$ and
        \begin{equation}
        \label{eq:G-energy}
            \hat{\mathcal{E}}_\lambda(\psi, A) = \frac{\kappa^4}{2\pi \lambda^2} \int_{\Omega^\tau}\left( |\Cov{A}\psi|^2 + |\Curl A|^2
                        + \frac{\kappa^2}{2} ( |\psi|^2 - \frac{\lambda}{\kappa^2} )^2\right) \,dx.
        \end{equation}

    \item $\psi$ and $A$ solve the Ginzburg-Landau equations if and only if $\hat{\psi}$ and $\hat{A}$ solve
            \begin{subequations}
            \label{eq:rGL}
            \begin{equation}
            \label{rGLpsi}
                (-\Delta_{A}  - \lambda) \psi = -\kappa^2 |\psi|^2\psi,
            \end{equation}
            \begin{equation}
            \label{rGLA}
                \Curl^*\Curl A = \Im\{ \bar{\psi}\Cov{A}\psi \}
            \end{equation}
        \end{subequations}
 for  $\lambda = \frac{\kappa^2 n}{b}$. The latter equations are valid on  $\Omega^\tau$  with the boundary conditions given in the next statement.

    \item   \label{reduced-gauge-form}
        If $(\psi, A)$ is of the form described in Proposition \ref{thm:fix-gauge}, then 
        $$\hat{A} = A^n_0 + a,\        \mbox{where}\ A^n_0(x) := \frac{n}{2} x^\perp,$$  where $x^\perp =(-x_2, x_1)$, and $\hat\psi$ and $a$ satisfy
            \begin{enumerate}[(a)]
            \item $a$ is double periodic with respect to $\mathcal{L}^\tau$,
            \item $\int_{\Omega^\tau} a = 0$,
            \item $\Div a = 0$,
            \item \label{reduced-quasiperiodic-bc}
                $\hat\psi(x + t) = e^{\frac{in}{2}t\wedge x}\hat\psi(x)$ for $t = r^\tau$, $r^\tau\tau$.
            \end{enumerate}

    \end{enumerate}


\emph{In what follows we drop the hat from $\hat{\psi}$, $\hat{A}$, and $\hat{\mathcal{E}}_\lambda$.}

We are now state our problem in terms of the fields $\psi$ and $a$. We define the Hilbert space $\Lpsi{2}{n}$ to be the closure under the $L^2$-norm of the space of all smooth
$\psi$ on $\Omega^\tau$ satisfying the quasiperiodic boundary condition (d) in part \eqref{reduced-gauge-form} above. 
$\Hpsi{2}{n}$ is then the space of all $\psi \in \Lpsi{2}{n}$ whose (weak) partial derivatives up to order $2$ are square-integrable.

Similarly, we define the Hilbert space $\LA{p}{\Div,0}$ to be the closure of the space of all smooth $a$ on $\Omega^\tau$ that satisfy
periodic boundary conditions, have mean zero, and are divergence free, and $\HA{2}{\Div,0}$ is then the subspace of $\LA{2}{\Div,0}$ consisting of those elements whose partial derivatives up to order $2$ are square-integrable.

Our problem then is, for each $n = 1,2,\ldots$,  find $(\psi, a) \in \Hpsi{2}{n} \times \HA{2}{\Div,0}$ so that $(\psi, A^n_0 + a)$
solves the rescaled Ginzburg-Landau equations \eqref{eq:rGL}, and among these find the one that minimizes the average energy $\mathcal{E}_\lambda$.


\section{Reduction to Finite-dimensional Problem}
\label{sec:reduction}

In this section we reduce the problem of solving Eqns \eqref{eq:rGL} to a finite dimensional problem. We address the latter in the next section. Substituting $A = A^n_0 + a$, we rewrite  \eqref{eq:rGL} as
\DETAILS{We begin by introducing the
maps $F_1 : \R \times \Hpsi{2}{n} \times \HA{2}{\Div,0} \to \Lpsi{2}{n}$
and $F_2 : \HA{2}{\Div,0} \times \Hpsi{2}{n} \to \LA{2}{\Div,0}$, which are defined to be}
\begin{subequations}
    \begin{equation}
    \label{eq:F-1}
        (L^n - \lambda)\psi + 2ia\cdot\nabla_{A^n_0}\psi 
        + |a|^2\psi + \kappa^2|\psi|^2\psi =0,
    \end{equation}
    \begin{equation}
    \label{eq:F-2}
        (M + |\psi|^2)a - \Im\{ \bar{\psi}\Cov{A^n_0}\psi \}=0,
    \end{equation}
\end{subequations}
where
\begin{equation}
\label{OP:LN}
    L^n := -\Delta_{A^n_0}
\mbox{ and }
    M := \Curl^*\Curl.
\end{equation}
\DETAILS{Equations \eqref{eq:rGL} is then equivalent to
\begin{subequations}
\label{eq:F-i=0}
    \begin{equation}
    \label{eq:F-1=0}
        F_1(\lambda, \psi, a) = 0,
    \end{equation}
    \begin{equation}
    \label{eq:F-2=0}
        F_2(\psi, a) = 0.
    \end{equation}
\end{subequations}

We summarize the properties of these two maps in the following proposition, whose straightforward proof is omitted.
\begin{proposition}
    \hfill
    \begin{enumerate}[(a)]
    \item $F_1$ and $F_2$ are $C^\infty$,
    \item for all $\lambda$, $F_1(\lambda, 0, 0) = 0$ and $F_2(0,0) = 0$,
    \item for all $\alpha \in \R$, $F_1(\lambda, e^{i\alpha}\psi, a) = e^{i\alpha}F_1(\lambda, \psi, a)$ and
            $F_2(e^{i\alpha}\psi, a) = F_2(\psi, a)$.
    \end{enumerate}
\end{proposition}}
The operators $L^n$ and $M$ are elementary and well studied. Their properties that will be used below are summarized in the following theorems, whose proofs may be found in Appendix \ref{sec:operators}.

\begin{theorem}
\label{thm:L-theorem}
    $L^n$ is a self-adjoint operator on $\Hpsi{2}{n}$ with spectrum $\sigma(L^n) = \{\, (2k + 1)n : k = 0, 1, 2, \ldots \,\}$ and
    $\dim_\C \Null (L^n - n) = n$.
\end{theorem}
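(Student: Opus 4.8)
The plan is to diagonalise $L^n$ explicitly: it is the Landau Hamiltonian with constant magnetic field $\Curl A^n_0 = n$, and, as on the whole plane, its spectrum is read off from magnetic creation/annihilation operators. The quasi-periodic boundary condition of part \eqref{reduced-gauge-form} will restrict the lowest Landau level to a finite-dimensional space of theta functions whose dimension equals the number $n$ of flux quanta per cell.

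\textbf{Self-adjointness and discreteness.} Since $L^n = -\Delta_{A^n_0} = \Cov{A^n_0}^{*}\Cov{A^n_0}$, the non-negative quadratic form $\psi\mapsto\int_{\Omega^\tau}|\Cov{A^n_0}\psi|^2$ is closed on the quasi-periodic $H^1$ space and hence defines a unique non-negative self-adjoint operator. The coefficients of $L^n$ are smooth on the compact cell and the boundary condition is a smooth magnetic-periodic one of first order, so by elliptic regularity the operator domain is $\Hpsi{2}{n}$. Compactness of the Sobolev embedding on the cell gives that $L^n$ has compact resolvent, hence purely discrete spectrum of finite-multiplicity eigenvalues with a complete system of (smooth) eigenfunctions.

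\textbf{Ladder operators and the form of the spectrum.} Set $\pi_j := -i\partial_j - (A^n_0)_j$ and $\pi_\pm := \pi_1 \pm i\pi_2$. From $A^n_0(x+t) = A^n_0(x) + \nabla(\tfrac n2\, t\wedge x)$ one sees that $\pi_\pm$ commute with the magnetic translations $\psi\mapsto e^{\frac{in}{2}t\wedge x}\psi(\cdot + t)$, hence preserve quasi-periodicity; each $\pi_j$ is self-adjoint on $\Lpsi{2}{n}$ (the boundary terms in the integration by parts cancel because the gauge factors do), so $\pi_+^{*} = \pi_-$. A direct computation gives $[\pi_1,\pi_2] = i\Curl A^n_0 = in$, whence $[\pi_+,\pi_-] = 2n$ and
\[
 L^n = \pi_1^2 + \pi_2^2 = \pi_-\pi_+ + n = \pi_+\pi_- - n, \qquad [L^n,\pi_\pm] = \mp\,2n\,\pi_\pm .
\]
Thus $L^n = \pi_+^{*}\pi_+ + n \ge n$, with equality exactly on $\Null\pi_+$. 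If $L^n\psi = \mu\psi$ with $\mu > n$ then $\pi_+\psi \ne 0$ and $L^n(\pi_+\psi) = (\mu - 2n)\pi_+\psi$, so $\mu - 2n \ge n$; iterating, every eigenvalue lies in $\{(2k+1)n : k \ge 0\}$. Conversely $\pi_-$ maps $\Null(L^n - (2k+1)n)$ into $\Null(L^n - (2k+3)n)$; on the former $\|\pi_-\psi\|^2 = \langle\pi_+\pi_-\psi,\psi\rangle = (2k+2)n\,\|\psi\|^2$, so $\pi_-$ is injective there, and since $\pi_-\pi_+ = L^n - n$ acts as the nonzero scalar $(2k+2)n$ on the latter, $\pi_-$ is also onto it. Hence all Landau levels are non-empty and share the multiplicity $\dim_\C\Null(L^n - n)$, and by completeness of the eigenfunctions $\sigma(L^n) = \{(2k+1)n : k \ge 0\}$.

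\textbf{The lowest level.} With $z = x_1 + ix_2$ and $\bar\partial = \tfrac12(\partial_1 + i\partial_2)$ one computes $\pi_+ = -i(2\bar\partial + \tfrac n2 z)$, so $\pi_+\psi = 0$ exactly when $\psi = e^{-\frac n4 |z|^2} f$ with $f$ entire. The boundary condition, for $t = r^\tau, r^\tau\tau$, becomes a pair of automorphy relations $f(z + r^\tau) = e^{\alpha_1 z + \beta_1} f(z)$, $f(z + r^\tau\tau) = e^{\alpha_2 z + \beta_2} f(z)$ with explicit constants, i.e.\ $f$ is a theta function; the argument principle applied to $f'/f$ over $\partial\Omega^\tau$ shows $f$ has exactly $n$ zeros per cell (matching the $n$ flux quanta). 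As the quotient of two such solutions is elliptic with poles confined to the $n$ zeros of the denominator, the Riemann--Roch theorem on the torus — or, concretely, exhibiting the $n$ translated Jacobi theta functions and verifying their independence — shows this solution space is $n$-dimensional. Hence $\dim_\C\Null(L^n - n) = n$. I expect this last point, the exact dimension count (and specifically the lower bound $\ge n$, as the argument principle readily gives only $\le n$), to be the one genuine obstacle; everything else is routine spectral theory together with the elementary algebra of $\pi_\pm$.
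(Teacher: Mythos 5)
Your spectral analysis follows essentially the same route as the paper: the paper also gets discreteness from the compact embedding of $H^2$ into $L^2$ on the cell, and also introduces ladder operators (its $L^n_\pm$ are, up to a factor of $-i$ and a sign convention, your $\pi_\mp$) satisfying $[L^n_+,L^n_-]=2n$ and $L^n - n = -L^n_+L^n_-$, then quotes the harmonic-oscillator argument you spell out. Your observation that $\pi_\pm$ commute with the magnetic translations, so that the ladder stays inside $\Lpsi{2}{n}$, is a point the paper leaves implicit, and your conjugation identity $\pi_+ = -i(2\bar\partial + \tfrac n2 z)$ is exactly the paper's $e^{\frac n4|x|^2}L^n_- e^{-\frac n4 |x|^2} = \partial_{x_1}+i\partial_{x_2}$, leading to the same description $\Null(L^n-n) = \{e^{-\frac n4|z|^2}f : f \text{ entire, quasi-periodic}\}$.

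The one genuine divergence — and the one place your proposal is incomplete, as you yourself flag — is the dimension count for this space of theta functions. Deferring to Riemann--Roch (or to "exhibiting $n$ independent Jacobi theta functions") is legitimate but leaves both bounds as imported facts: the lower bound needs the existence of at least one nonzero automorphic $f$ before the $L(D)$ argument can start, and the argument principle alone does not give the upper bound $\le n$ (a nonzero combination vanishing at $n$ prescribed points is consistent with having exactly $n$ zeros; you genuinely need $\ell(D)=n$). The paper closes this gap with a short, self-contained Fourier argument that delivers both bounds at once: periodicity of $\Theta$ in the direction of the first basis vector forces an expansion $\Theta(z)=\sum_k c_k e^{2kiz}$, and quasi-periodicity in the direction of $\tau$ forces the recursion $c_{k+n} = e^{in\pi\tau}e^{2ki\pi\tau}c_k$, so the solution space is parametrized exactly by $(c_0,\dots,c_{n-1}) \in \C^n$ (with $\Im\tau>0$ guaranteeing convergence). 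If you want your proof to be self-contained, replacing the Riemann--Roch appeal by this two-line Fourier recursion is the way to do it.
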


\begin{theorem}
\label{thm:M-theorem}
    $M$ is a strictly positive operator on $\HA{2}{\Div,0}$ with discrete spectrum.
\end{theorem}



We first solve the second equation \eqref{eq:F-2} for $a$ in terms of $\psi$, using the fact that $M$ is a strictly positive operator, and
therefore $M + |\psi|^2$ is invertible. We have $a=a(\psi)$, where
\begin{equation}
\label{eq:a=}
    a(\psi) = (M + |\psi|^2)^{-1}\Im(\bar{\psi}\Cov{A^n_0}\psi).
\end{equation}
We collect the elementary properties of the map $a$ in the following preposition, where we identify  with a real Banach space using $\psi \leftrightarrow \overrightarrow{\psi}:=(\Re \psi, \Im \psi)$.
\begin{proposition}
 The unique solution, $a(\psi)$, of \eqref{eq:F-2} maps $\Hpsi{2}{n}$ to $\HA{2}{\Div,0}$ and
 has the following properties:
    \begin{enumerate}[(a)]
    \item $a(\cdot)$ is analytic as a map between real Banach spaces.
    \item $a(0) = 0$.
    \item For any $\alpha \in \R$, $a(e^{i\alpha}\psi) = a(\psi)$.
    \end{enumerate}
\end{proposition}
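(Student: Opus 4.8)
The plan is to establish the three properties of $a(\psi)$ by combining the implicit representation \eqref{eq:a=} with the functional-analytic facts already at our disposal, namely that $M$ is strictly positive on $\HA{2}{\Div,0}$ (Theorem \ref{thm:M-theorem}) and that multiplication by $|\psi|^2$ is a bounded symmetric nonnegative operator whenever $\psi \in \Hpsi{2}{n}$.

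First I would record the mapping property. The bilinear expression $\Im(\bar\psi\,\Cov{A^n_0}\psi)$ is, componentwise, a sum of products of the form $(\text{component of }\psi)\cdot(\text{component of }\nabla\psi)$ plus $A^n_0$-times-quadratic-in-$\psi$ terms; since $\Hpsi{2}{n} \hookrightarrow H^2(\Omega^\tau)$ and $H^2(\Omega^\tau)$ is a Banach algebra in dimension $2$ (and multiplication $H^2 \times H^1 \to L^2$ is bounded), the right-hand side lies in $\LA{2}{\Div,0}$ — one checks it is divergence-free and mean-zero, which is exactly why the space $\LA{2}{\Div,0}$ was chosen; the divergence-free property follows from $\Div\Im(\bar\psi\Cov{A^n_0}\psi)$ vanishing by the structure of the current density for gauge-periodic fields, and the mean-zero property follows because $\Curl^*\Curl\, a$ has zero mean so the equation \eqref{eq:F-2} is consistent. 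Then $(M + |\psi|^2)^{-1}$ maps $\LA{2}{\Div,0}$ into $\HA{2}{\Div,0}$ by elliptic regularity for the operator $M$ perturbed by a bounded potential, giving $a(\psi) \in \HA{2}{\Div,0}$.

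Next, property (b) is immediate: setting $\psi = 0$ makes the right-hand side of \eqref{eq:a=} vanish, hence $a(0) = 0$. Property (c) is a direct computation: replacing $\psi$ by $e^{i\alpha}\psi$ with $\alpha \in \R$ constant leaves $|\psi|^2 = |e^{i\alpha}\psi|^2$ unchanged, and $\Im(\overline{e^{i\alpha}\psi}\,\Cov{A^n_0}(e^{i\alpha}\psi)) = \Im(e^{-i\alpha}\bar\psi\, e^{i\alpha}\Cov{A^n_0}\psi) = \Im(\bar\psi\,\Cov{A^n_0}\psi)$, so both $M + |\psi|^2$ and the source term are invariant, whence $a(e^{i\alpha}\psi) = a(\psi)$.

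The main work, and the step I expect to be the genuine obstacle, is analyticity (a). I would argue via the analytic implicit function theorem applied to the map $G(\psi, a) := (M + |\psi|^2)a - \Im(\bar\psi\,\Cov{A^n_0}\psi)$ from $\Hpsi{2}{n} \times \HA{2}{\Div,0}$ to $\LA{2}{\Div,0}$, viewed as a map of real Banach spaces under the identification $\psi \leftrightarrow \overrightarrow\psi$. The key points are that $G$ is real-analytic — indeed polynomial of degree at most $3$ in $(\overrightarrow\psi, a)$ with coefficients that are bounded multiplication/differential operators, so each term is a bounded multilinear map between the relevant Sobolev spaces and hence entire — and that the partial Fréchet derivative $D_a G(\psi, a) = M + |\psi|^2$ is a bounded invertible operator from $\HA{2}{\Div,0}$ to $\LA{2}{\Div,0}$ for every $\psi$, by strict positivity of $M$ together with $|\psi|^2 \geq 0$ and elliptic regularity. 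Since $G(\psi, a(\psi)) \equiv 0$, the analytic implicit function theorem yields that $\psi \mapsto a(\psi)$ is real-analytic on all of $\Hpsi{2}{n}$. The delicate bookkeeping is verifying that multiplication by $|\psi|^2$ genuinely maps $\HA{2}{\Div,0}$ boundedly into $\LA{2}{\Div,0}$ with norm controlled by $\|\psi\|_{H^2}^2$ (so that the Neumann-series inverse is locally bounded and analytic in $\psi$), which again rests on the algebra property of $H^2$ in two dimensions; once that is in hand the rest is the standard IFT machinery.
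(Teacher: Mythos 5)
Your proofs of (b) and (c) coincide with the paper's, which simply declares them immediate from \eqref{eq:a=}, and your proof of the main point (a) is a valid alternative to the paper's. The paper argues directly from the explicit formula \eqref{eq:a=}: the source $\Im(\bar\psi\,\Cov{A^n_0}\psi)$ is polynomial in $(\Re\psi,\Im\psi)$ and their first derivatives, hence real-analytic, and $\psi\mapsto(M+|\psi|^2)^{-1}$ is analytic because the resolvent of $M$ is analytic and $\psi\mapsto|\psi|^2$ is quadratic; (a) then follows by composition. You instead apply the analytic implicit function theorem to $G(\psi,a)=(M+|\psi|^2)a-\Im(\bar\psi\,\Cov{A^n_0}\psi)$, using the invertibility of $D_aG=M+|\psi|^2$. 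The two routes rest on the same Neumann-series mechanism packaged differently; yours has the mild advantage of not needing the explicit solution formula and of delivering local uniqueness of $a(\psi)$ within the same statement, while the paper's is shorter. Your bookkeeping of the Sobolev mapping properties (algebra property of $H^2$ in two dimensions, elliptic regularity for $M$) is precisely what the paper omits, and it is correct as far as it goes.

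One sub-claim in your mapping-property paragraph is, however, false. The supercurrent $\Im(\bar\psi\,\Cov{A^n_0}\psi)$ is \emph{not} divergence-free for arbitrary $\psi\in\Hpsi{2}{n}$: a direct computation gives $\Div\Im(\bar\psi\,\Cov{A^n_0}\psi)=\Im(\bar\psi\,\Delta_{A^n_0}\psi)$, which vanishes when $\psi$ is an eigenfunction of $\Delta_{A^n_0}$ or a solution of \eqref{rGLpsi}, but not identically in $\psi$. The same defect afflicts the term $|\psi|^2a$, which is not divergence-free even when $a$ is. Consequently neither the source nor $(M+|\psi|^2)a$ literally lies in $\LA{2}{\Div,0}$, and equation \eqref{eq:F-2} has to be read with the orthogonal (Helmholtz-type) projection onto divergence-free, mean-zero fields applied to it --- equivalently, the map into $\LA{2}{\Div,0}$ is defined by composing with that projection. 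The paper is silent on this point as well, so you are not being held to a higher standard than the authors; but your stated justification (``by the structure of the current density for gauge-periodic fields'') is not a proof and is wrong as it stands. Once the projection is inserted, both your implicit-function-theorem argument and the paper's direct composition argument go through unchanged, since the projection is a bounded linear, hence analytic, map.
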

\begin{proof}
    The only statement that does not follow immediately from the definition of $a$ is (a). It is clear that $\Im(\bar{\psi}\Cov{A^n_0}\psi)$ is
    real-analytic as it is a polynomial in $\psi$ and $\nabla\psi$, and their complex conjugates.   We also note that $(M - z)^{-1}$ is
    complex-analytic in $z$ on the resolvent set of $M$, and therefore, $(M + |\psi|^2)^{-1}$ is analytic. (a) now follows.
\end{proof}

Now we substitute the expression \eqref{eq:a=} for $a$ into \eqref{eq:F-1} to get a single equation $F(\lambda, \psi) = 0$, where the map
$F : \R \times \Hpsi{2}{n} \to \Lpsi{2}{n}$ is defined as 
\begin{equation}
\label{eq:F}
    F(\lambda, \psi) = (L^n - \lambda)\psi + 2ia(\psi)\cdot\nabla_{A^n_0}\psi + |a(\psi)|^2\psi + \kappa^2|\psi|^2\psi.
\end{equation}

The following proposition lists some properties of $F$.
\begin{proposition}
    \hfill
    \begin{enumerate}[(a)]
    \item $F$ is analytic as a map between real Banach spaces,
    \item for all $\lambda$, $F(\lambda, 0) = 0$,
    \item for all $\lambda$, $D_\psi F(\lambda, 0)=L^n - \lambda, $
    \item for all $\alpha \in \R$, $F(\lambda, e^{i\alpha}\psi) = e^{i\alpha}F(\lambda, \psi)$.
	\item for all $\psi$, $\langle \psi, F(\lambda, \psi) \rangle \in \R$.
    \end{enumerate}
\end{proposition}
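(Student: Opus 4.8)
The plan is to verify each of (a)--(e) mostly by inspection, leaning on the structural properties of the constituent pieces already established in the excerpt. For (a), I would note that $F$ is built from: the bounded linear operator $L^n - \lambda$ (analytic, indeed linear, in its arguments); the map $\psi \mapsto a(\psi)$, which the preceding proposition asserts is real-analytic from $\Hpsi{2}{n}$ to $\HA{2}{\Div,0}$; the covariant derivative $\nabla_{A^n_0}$, a bounded linear map $\Hpsi{2}{n} \to \Hpsi{1}{n}$; and the pointwise multiplications $(a, \psi) \mapsto a \cdot \nabla_{A^n_0}\psi$, $\psi \mapsto |a(\psi)|^2 \psi$, $\psi \mapsto |\psi|^2 \psi$. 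Each multiplication is a continuous multilinear (hence analytic) map on the relevant Sobolev spaces by the Sobolev embedding $H^2(\Omega^\tau) \hookrightarrow L^\infty$ and the Banach algebra property of $H^2$ in two dimensions (here one must be a little careful that the target lands in $\Lpsi{2}{n}$, which it does since all the nonlinear terms are lower order). A composition and finite sum of real-analytic maps between real Banach spaces is real-analytic, giving (a).

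For (b): setting $\psi = 0$, every term of \eqref{eq:F} vanishes ($a(0)=0$ by the preceding proposition, and the cubic terms are obviously zero), so $F(\lambda, 0) = 0$. For (c): differentiate \eqref{eq:F} in $\psi$ at $\psi = 0$. The term $(L^n - \lambda)\psi$ is linear and contributes $L^n - \lambda$; the remaining three terms are each at least quadratic in $\psi$ near $\psi=0$ — using $a(0)=0$ and $Da(0)$ finite, the term $2i a(\psi)\cdot \nabla_{A^n_0}\psi$ is $O(\|\psi\|^2)$, and $|a(\psi)|^2\psi$, $\kappa^2|\psi|^2\psi$ are $O(\|\psi\|^3)$ — so they contribute nothing to the first derivative at the origin. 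Hence $D_\psi F(\lambda, 0) = L^n - \lambda$.

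For (d), the gauge covariance: replacing $\psi$ by $e^{i\alpha}\psi$ with $\alpha$ constant, we have $\nabla_{A^n_0}(e^{i\alpha}\psi) = e^{i\alpha}\nabla_{A^n_0}\psi$, $|e^{i\alpha}\psi|^2 = |\psi|^2$, and $a(e^{i\alpha}\psi) = a(\psi)$ by property (c) of the preceding proposition; substituting into \eqref{eq:F} pulls out an overall factor $e^{i\alpha}$ from each of the four terms, giving $F(\lambda, e^{i\alpha}\psi) = e^{i\alpha} F(\lambda,\psi)$. Finally, for (e), I would pair \eqref{eq:F} against $\psi$ in the $L^2$ inner product: $\langle \psi, (L^n-\lambda)\psi\rangle$ is real since $L^n$ is self-adjoint (Theorem \ref{thm:L-theorem}) and $\lambda$ real; $\langle \psi, |a(\psi)|^2\psi\rangle = \int |a(\psi)|^2 |\psi|^2$ and $\langle\psi, \kappa^2|\psi|^2\psi\rangle = \kappa^2\int|\psi|^4$ are manifestly real; and for the middle term, $\langle \psi, 2i a(\psi)\cdot\nabla_{A^n_0}\psi\rangle = 2i\int \bar\psi\, a(\psi)\cdot\nabla_{A^n_0}\psi = 2\int a(\psi)\cdot\Im(\bar\psi\nabla_{A^n_0}\psi) = 2\int a(\psi)\cdot J$, which is real because $a(\psi)$ and the current $J = \Im(\bar\psi\nabla_{A^n_0}\psi)$ are real vector fields (one should note $\Im(\bar\psi A^n_0 \psi) = |\psi|^2 A^n_0$ is also real, so replacing $\nabla_{A^n_0}$ by $\nabla$ costs only a real term). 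Hence $\langle\psi, F(\lambda,\psi)\rangle\in\R$.

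The only step requiring genuine care — and the one I would expect to be the main obstacle — is (a): one must check that each nonlinear term genuinely defines an analytic map into $\Lpsi{2}{n}$ (not merely into some larger space) with the correct boundary behavior, which rests on the two-dimensional Sobolev algebra estimates and on confirming that the quasi-periodic boundary condition \eqref{reduced-quasiperiodic-bc} is respected by all the terms (it is, since $a$ is genuinely periodic and multiplication by a periodic function preserves the quasi-periodicity class). Everything else is a direct computation using the already-cataloged properties of $L^n$, $M$, and $a(\cdot)$.
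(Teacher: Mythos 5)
Your proposal follows essentially the same route as the paper: (a) is deduced from the real-analyticity of $\psi\mapsto a(\psi)$ together with the polynomial structure of the remaining terms, (b)--(d) are direct computations, and (e) is proved by expanding $\langle\psi,F(\lambda,\psi)\rangle$ term by term. Parts (a)--(d) are fine, and your Sobolev-algebra remarks in (a) supply detail the paper leaves implicit.

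There is, however, a small but genuine gap in your treatment of the cross term in (e). You assert
\begin{equation*}
2i\int_{\Omega^\tau}\bar\psi\,a(\psi)\cdot\Cov{A^n_0}\psi
=2\int_{\Omega^\tau}a(\psi)\cdot\Im\bigl(\bar\psi\Cov{A^n_0}\psi\bigr),
\end{equation*}
but for a complex number $z$ one has $2iz=2i\Re z-2\Im z$, so this identity (even apart from the sign, which should be $-2$) holds only if $\Re\int_{\Omega^\tau}\bar\psi\,a(\psi)\cdot\Cov{A^n_0}\psi=0$. That real part equals $\frac12\int_{\Omega^\tau}a(\psi)\cdot\nabla|\psi|^2$ (the $A^n_0$ contribution to $\Re(\bar\psi\Cov{A^n_0}\psi)$ vanishes, as your parenthetical notes, but the $\Re(\bar\psi\nabla\psi)$ contribution does not), and its vanishing is not automatic: it requires integrating by parts and using the periodicity of the integrand together with $\Div a(\psi)=0$. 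This is exactly the ingredient the paper's proof makes explicit --- there one takes the complex conjugate of $2i\int\bar\psi\,a(\psi)\cdot\nabla\psi$ and integrates by parts, invoking periodicity and $\Div a(\psi)=0$, to show the term equals its own conjugate. Once you insert this one-line justification (which you have all the tools for, since conditions (a)--(c) on $a$ are at your disposal), your argument for (e) closes and the proof is complete.
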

\begin{proof}
    The first property follows from the definition of $F$ and the corresponding analyticity of $a(\psi)$. (b) through (d) are straightforward calculations. For (e), we calculate that
   	\begin{equation*}
		\langle \psi, F(\lambda, \psi) \rangle
		= \langle \psi, (L^n - \lambda)\psi \rangle + 2i\int_{\Omega^\tau} \bar{\psi}\alpha(\psi)\cdot\nabla\psi
			+ 2\int_{\Omega^\tau} (\alpha(\psi)\cdot A_0)|\psi|^2
			+ \int_{\Omega^\tau} |\alpha(\psi)|^2 |\psi|^2
			+ \kappa^2 \int_{\Omega^\tau} |\psi|^4.
	\end{equation*}
	The final three terms are clearly real and so is the first because $L^n - \lambda$ is self-adjoint. For the second term we calculate the complex conjugate and see that
	\begin{equation*}
		\overline{ 2i\int_{\Omega^\tau} \bar{\psi}\alpha(\psi)\cdot\nabla\psi }
			= -2i\int_{\Omega^\tau} \psi\alpha(\psi)\cdot\nabla\bar{\psi}
			= 2i\int_{\Omega^\tau} (\nabla\psi \cdot \alpha(\psi))\bar{\psi},
	\end{equation*}
	where we have integrated by parts and used the fact that the boundary terms vanish due to the periodicity of the integrand and that $\Div a(\psi) = 0$. Thus this term is also real and (e) is established.
\end{proof}

   Now we reduce the equation $F(\lambda,\psi) = 0$ to an equation on the finite-dimensional subspace $\Null (L^n - n)$. To this end we use the standard method of Lyapunov-Schmidt reduction. Let 
   $X:=\Hpsi{2}{n}$ and $Y:= \Lpsi{2}{N}$ and let $K = \Null (L^n - n)$. We let $P$ be the Riesz projection onto $K$, that is,
    \begin{equation}
        P := -\frac{1}{2\pi i} \oint_\gamma (L^n - z)^{-1} \,dz,
    \end{equation}
    where $\gamma \subseteq \C$ is a contour around $0$ that contains no other points of the spectrum of $L$.
    This is possible since $0$ is an isolated eigenvalue of $L$. $P$ is a bounded, orthogonal projection, and if we
    let $Z := \Null P$, then $Y = K \oplus Z$. We also let $Q := I - P$, and so $Q$ is a projection onto $Z$.

    The equation $F(\lambda,\psi) = 0$ is therefore equivalent to the pair of equations
    \begin{align}
        \label{BT:eqn1} &P F(\lambda, P\psi + Q\psi) = 0, \\
        \label{BT:eqn2} &Q F(\lambda, P\psi + Q\psi) = 0.
    \end{align}

    We will now solve \eqref{BT:eqn2} for $w = Q\psi$ in terms of $\lambda$ and $v = P\psi$. To do this, we introduce the map
    $G : \R \times K \times Z \to Z$ to be $G(\lambda, v, w) := QF(\lambda, v + w)$.    Applying the Implicit Function Theorem
    to $G$, we obtain a real-analytic function $w : \R \times K \to Z$, defined on a neighbourhood of $(n, 0)$,
    such that $w = w(\lambda, v)$ is a unique solution to $G(\lambda, v, w) = 0$, for $(\lambda, v)$ in that neighbourhood.
    We substitute this function into \eqref{BT:eqn1} and see that the latter equation 
     in a neighbourhood of $(n, 0)$ is equivalent to the equations
     \begin{equation}
    \label{BT:psivlam}\psi =v+ w(\lambda, v)
    \end{equation} and 
       \begin{equation}
    \label{BT:bif-eqn}
        \gamma(\lambda, v):= PF(\lambda, v + w(\lambda, v)) = 0
    \end{equation}
    (the \emph{bifurcation equation}). Note that 
     $\gamma : \R \times K \to \C$. 
    We have shown that in a neighbourhood of $(n, 0)$ in $\R \times X$, $(\lambda, \psi)$ solves $F(\lambda, \psi) = 0$
    if and only if $(\lambda, v)$, with $v = P\psi$, solves
    \eqref{BT:bif-eqn}.

 Finally we note that $\gamma$ inherits the symmetry of the original equation:
    \begin{lemma}
        For every $\alpha \in \R$, $\gamma(\lambda, e^{i\alpha}v) = e^{i\alpha} \gamma(\lambda, v)$.
    \end{lemma}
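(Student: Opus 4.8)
The plan is to propagate the $S^1$-equivariance $F(\lambda, e^{i\alpha}\psi) = e^{i\alpha}F(\lambda, \psi)$ (property (d) of $F$) through the Lyapunov--Schmidt construction. The key auxiliary observation is that the Riesz projection $P$, and hence $Q = I - P$, commute with multiplication by the constant phase $e^{i\alpha}$: indeed $L^n$ is complex-linear, so $(L^n - z)^{-1}(e^{i\alpha}\psi) = e^{i\alpha}(L^n - z)^{-1}\psi$ for $z$ in the resolvent set, and integrating over the contour $\gamma$ gives $P(e^{i\alpha}\psi) = e^{i\alpha}P\psi$. In particular both $K = \Null(L^n - n)$ and $Z = \Null P$ are invariant under the $S^1$-action $\psi \mapsto e^{i\alpha}\psi$, and this action is isometric, so balls centred at the origin in $K$ and $Z$ are invariant.

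First I would show that $w(\lambda, e^{i\alpha}v) = e^{i\alpha} w(\lambda, v)$ on a suitable neighbourhood of $(n,0)$. Since the implicit function theorem was applied to $G(\lambda, v, w) = QF(\lambda, v+w)$ on a neighbourhood of $(n,0,0)$, and since this neighbourhood may be chosen to be $S^1$-invariant (taking it to be a product of balls), it suffices to check that $e^{i\alpha}w(\lambda,v)$ is also a solution of $G(\lambda, e^{i\alpha}v, \cdot) = 0$ lying in $Z$. Invariance of $Z$ handles the membership, and
\begin{equation*}
    QF\bigl(\lambda,\, e^{i\alpha}v + e^{i\alpha}w(\lambda,v)\bigr)
    = QF\bigl(\lambda,\, e^{i\alpha}(v + w(\lambda,v))\bigr)
    = Q\bigl(e^{i\alpha}F(\lambda, v+w(\lambda,v))\bigr)
    = e^{i\alpha}\, QF(\lambda, v+w(\lambda,v)) = 0,
\end{equation*}
using property (d) of $F$, the commutation of $Q$ with $e^{i\alpha}$, and $QF(\lambda, v+w(\lambda,v)) = 0$. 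Uniqueness of the solution furnished by the implicit function theorem then forces $w(\lambda, e^{i\alpha}v) = e^{i\alpha}w(\lambda,v)$.

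Granting this, the claim is a one-line computation:
\begin{equation*}
    \gamma(\lambda, e^{i\alpha}v)
    = PF\bigl(\lambda,\, e^{i\alpha}v + w(\lambda, e^{i\alpha}v)\bigr)
    = PF\bigl(\lambda,\, e^{i\alpha}(v + w(\lambda,v))\bigr)
    = e^{i\alpha}\, PF(\lambda, v + w(\lambda,v))
    = e^{i\alpha}\gamma(\lambda,v),
\end{equation*}
again invoking property (d) of $F$ together with $P(e^{i\alpha}\,\cdot\,) = e^{i\alpha}P(\cdot)$.

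The only genuinely delicate point — and the step I would be most careful about — is the bookkeeping around the neighbourhood in the implicit function theorem: one must know that the uniqueness assertion for $w$ holds on an $S^1$-invariant set, so that $e^{i\alpha}w(\lambda,v)$ is legitimately compared against $w(\lambda, e^{i\alpha}v)$. This is automatic because the $S^1$-action is a linear isometry of $X$ commuting with $P$ and $Q$, so the construction can be carried out on invariant balls; everything else is a direct consequence of the equivariance properties already recorded for $F$, $P$, and $Q$.
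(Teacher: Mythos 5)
Your proposal is correct and follows essentially the same route as the paper: establish $w(\lambda, e^{i\alpha}v) = e^{i\alpha}w(\lambda,v)$ via the equivariance of $G$ and the uniqueness clause of the implicit function theorem, then push the phase through $P$. The paper leaves implicit the facts you spell out (that $P$ and $Q$ commute with the constant phase and that the uniqueness neighbourhood can be taken $S^1$-invariant), so your version is just a more careful rendering of the same argument.
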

    \begin{proof}
        We first check that $w(\lambda, e^{i\alpha}v) = e^{i\alpha} w(\lambda, v)$. We note that by definition of $w$,
        $G(\lambda,  e^{i\alpha} v, w(\lambda, e^{i\alpha} v)) = 0$, but by the symmetry of $F$, we also have
        $G(\lambda,  e^{i\alpha} v,  e^{i\alpha} w(\lambda, v)) =  e^{i\alpha} G(\lambda,v, w(\lambda,v)) = 0$. The uniqueness of $w$
        then implies that $w(\lambda, e^{i\alpha} v) = e^{i\alpha} w(\lambda, v)$.
        We can now verify that
        \begin{equation*}
            \gamma(\lambda, e^{i\alpha} v) = PF(\lambda, e^{i\alpha} v + w(\lambda,  e^{i\alpha} v))
                = e^{i\alpha} PF(\lambda, v + w(\lambda, v) ) \rangle = e^{i\alpha}\gamma(\lambda,v).
        \qedhere
        \end{equation*}
    \end{proof}

Solving the bifurcation equation \eqref{BT:bif-eqn} is a subtle problem unless $n=1$. The latter case is tackled in the next section.

\section{Bifurcation Theorem. $n=1$}
\label{sec:bifurcation-analysis n=1}

In this section we look at the case $n = 1$, and look at solutions near the trivial solution. For convenience we drop the index $n = 1$ from the notation. We will see that as $b  = \frac{\kappa^2 }{\lambda}$ decreases past the critical value $b = \kappa^2$, a branch of non-trivial solutions bifurcates from the trivial solution. More precisely, we have the following result.

\begin{theorem}
\label{thm:bifurcation-resultN1}
    For every $\tau$ 
    there exists a branch, $ (\lambda_s, \psi_s, A_s)$, $s \in \C$ with $|s|^2 < \epsilon$ for some $\epsilon > 0$, of nontrivial solutions of  the rescaled Ginzburg-Landau equations \eqref{eq:rGL}, unique (apart from the trivial solution $(1,0,A_0)$) in a sufficiently small neighbourhood of $(1,0,A_0)$ in $\R \times \mathscr{H}(\tau) \times \vec{\mathscr{H}}(\tau)$, and s.t.
    \begin{equation*}
    \begin{cases}
        \lambda_s = 1 + g_\lambda(|s|^2), \\
        \psi_s = s\psi_0 + sg_\psi(|s|^2), \\
        A_s = A_0 + g_A(|s|^2),
    \end{cases}
    \end{equation*}
    where $(L - 1)\psi_0 = 0,\ g_\psi$ is orthogonal to $\Null(L - 1)$,  $g_\lambda : [0,\epsilon) \to \R$, $g_\psi : [0,\epsilon) \to \mathscr{H}(\tau)$, and $g_A : [0,\epsilon) \to \vec{\mathscr{H}}(\tau)$ are real-analytic functions such that $g_\lambda(0) = 0$, $g_\psi(0) = 0$, $g_A(0) = 0$ and  $g'_\lambda(0) > 0$. Moreover,
    \begin{align} \label{glambda'}
		g_\lambda'(0) = \left(\kappa^2 - \frac{1}{2} \right) \frac{\int_{\Omega^\tau} |\psi_0|^4}{\int_{\Omega^\tau} |\psi_0|^2 }
							+ \frac{1}{4\pi} \int_{\Omega^\tau} |\psi_0|^2.
	\end{align}
    \end{theorem}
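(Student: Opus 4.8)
\emph{Proof proposal.} The plan is to finish the Lyapunov--Schmidt reduction set up in Section~\ref{sec:reduction}, using the two facts peculiar to $n=1$: by Theorem~\ref{thm:L-theorem} the lowest eigenvalue of $L=L^1$ is $1$ and $K:=\Null(L-1)$ is \emph{complex} one-dimensional, $K=\C\psi_0$; and by the Lemma at the end of Section~\ref{sec:reduction} the reduced map $\gamma$ is $U(1)$-equivariant. Writing $v=s\psi_0$, $s\in\C$, and comparing power series in $(s,\bar s)$, equivariance forces $\gamma(\lambda,s\psi_0)=s\,d(\lambda,|s|^2)\psi_0$ for a real-analytic $d$ near $(1,0)$, with $d(1,0)=0$ and $\partial_\lambda d(1,0)=-1$ (read off from $F(\lambda,0)=0$, $D_\psi F(\lambda,0)=L-\lambda$, and $D_v w(\lambda,0)=0$). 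The Implicit Function Theorem then produces a unique real-analytic $\lambda=\lambda(t)$, $\lambda(0)=1$, solving $d(\lambda,t)=0$ near $t=0$, and I set $g_\lambda(t):=\lambda(t)-1$, so $\lambda_s=1+g_\lambda(|s|^2)$.

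Next I would recover $\psi_s$ and $A_s$. The same equivariance argument, applied to $w$, gives $w(\lambda,s\psi_0)=s\,W(\lambda,|s|^2)$ with $W$ real-analytic, $Z$-valued, and $W(\lambda,0)=0$ (since $D_v w(\lambda,0)=0$ because $D_vG(\lambda,0,0)=Q(L-\lambda)P=0$). Hence $\psi_s=v+w(\lambda_s,v)=s\psi_0+s\,g_\psi(|s|^2)$ with $g_\psi(t):=W(\lambda(t),t)$ real-analytic, $g_\psi(0)=0$, valued in $Z=\Null P$ and so orthogonal to $\Null(L-1)$; and $A_s:=A_0+a(\psi_s)$, where since $a(e^{i\alpha}\psi)=a(\psi)$ and $\psi_{e^{i\alpha}s}=e^{i\alpha}\psi_s$ the map $s\mapsto a(\psi_s)$ is $U(1)$-invariant and real-analytic, hence of the form $g_A(|s|^2)$ with $g_A(0)=a(0)=0$. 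That $(\psi_s,A_s)$ solves \eqref{eq:rGL} and is the unique nontrivial solution near $(1,0,A_0)$ up to the $U(1)$ symmetry follows from the equivalence of \eqref{eq:rGL} with the bifurcation equation plus the uniqueness in the two applications of the Implicit Function Theorem. This part is routine bifurcation bookkeeping; the only care needed is tracking the $U(1)$-weights so that $\lambda_s$, $g_\psi$, $g_A$ genuinely depend only on $|s|^2$.

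For the formula \eqref{glambda'}, differentiate $d(\lambda(t),t)=0$ at $t=0$: with $\partial_\lambda d(1,0)=-1$ this gives $g_\lambda'(0)=\partial_t d(1,0)$. Pairing the bifurcation equation with $\psi_0$ (and using that $P$ is the orthogonal projection onto $K\ni\psi_0$) gives
\begin{equation*}
	s\,d(1,|s|^2)\,\|\psi_0\|^2=\big\langle\psi_0,\,F(1,\psi^\flat_s)\big\rangle,\qquad
	\psi^\flat_s:=s\psi_0+w(1,s\psi_0)=s\psi_0+O(|s|^3);
\end{equation*}
since $d(1,0)=0$, the coefficient of $s|s|^2$ on the left is $g_\lambda'(0)\|\psi_0\|^2$. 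On the right, $(L-1)\psi^\flat_s$ drops out of $\langle\psi_0,\cdot\rangle$ because $L-1$ is self-adjoint and $(L-1)\psi_0=0$; and since $a(\psi^\flat_s)=|s|^2a_0+O(|s|^4)$, where $a_0\in\HA{2}{\Div,0}$ is determined by $Ma_0=\Im(\bar\psi_0\Cov{A_0}\psi_0)$ (possible by Theorem~\ref{thm:M-theorem}), the term $|a(\psi)|^2\psi$ in \eqref{eq:F} is $O(|s|^5)$; collecting the $s|s|^2$ terms of $2ia(\psi)\cdot\Cov{A_0}\psi$ and $\kappa^2|\psi|^2\psi$ yields
\begin{equation*}
	g_\lambda'(0)\,\|\psi_0\|^2=\kappa^2\!\int_{\Omega^\tau}\!|\psi_0|^4+2i\!\int_{\Omega^\tau}\!\bar\psi_0\,a_0\!\cdot\!\Cov{A_0}\psi_0 .
\end{equation*}

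The main obstacle --- and the only genuinely nontrivial computation --- is to put the cross term into closed form. Write $\bar\psi_0\Cov{A_0}\psi_0=\tfrac12\nabla|\psi_0|^2+iJ_0$, $J_0:=\Im(\bar\psi_0\Cov{A_0}\psi_0)=Ma_0$. Using $\Div a_0=0$ (so $\int_{\Omega^\tau}a_0\cdot\nabla|\psi_0|^2=0$) and integrating $\int_{\Omega^\tau}a_0\cdot\Curl^*\Curl a_0$ by parts, the cross term becomes $-2\int_{\Omega^\tau}(\Curl a_0)^2$. To identify $\Curl a_0$, I would use the lowest-Landau-level structure: $L\psi_0=\psi_0$ is equivalent to $(\Cov{A_0}\psi_0)_1=-i(\Cov{A_0}\psi_0)_2$ (the factorization $L-1=\bar\partial^{*}_{A_0}\bar\partial_{A_0}$ underlying Theorem~\ref{thm:L-theorem}), which together with $\Re(\bar\psi_0\Cov{A_0}\psi_0)=\tfrac12\nabla|\psi_0|^2$ gives, in one line, $J_0=\tfrac12(\nabla|\psi_0|^2)^{\perp}=-\tfrac12\Curl^*|\psi_0|^2$. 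Hence $\Curl^*\big(\Curl a_0+\tfrac12|\psi_0|^2\big)=0$, so $\Curl a_0+\tfrac12|\psi_0|^2$ is constant; since $\int_{\Omega^\tau}\Curl a_0=0$ and $|\Omega^\tau|=2\pi$, that constant is $\tfrac1{4\pi}\int_{\Omega^\tau}|\psi_0|^2$. Plugging $\Curl a_0=\tfrac1{4\pi}\int_{\Omega^\tau}|\psi_0|^2-\tfrac12|\psi_0|^2$ into $-2\int_{\Omega^\tau}(\Curl a_0)^2$, then into the previous display, and dividing by $\|\psi_0\|^2=\int_{\Omega^\tau}|\psi_0|^2$ gives exactly \eqref{glambda'}. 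Finally $g_\lambda'(0)>0$ is immediate from \eqref{glambda'} once $\kappa^2\geq\tfrac12$, the standing Type-II hypothesis.
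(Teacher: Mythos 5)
Your proposal is correct and follows essentially the same route as the paper: the paper packages your inline $U(1)$-equivariant Lyapunov--Schmidt argument as Theorem \ref{thm:bifurcation-theorem-1} in Appendix \ref{sec:bifurcation-theorem}, and your computation of $g_\lambda'(0)$ --- pairing the equation with $\psi_0$, reducing the cross term to $-2\int_{\Omega^\tau}(\Curl a_1)^2$ via $\Curl^*\Curl a_1=\Im(\bar\psi_0\Cov{A_0}\psi_0)$, and identifying $\Curl a_1$ through the lowest-Landau-level identity $\Im(\bar\psi_0\Cov{A_0}\psi_0)=-\tfrac12\Curl^*|\psi_0|^2$ together with $\int_{\Omega^\tau}\Curl a_1=0$ --- is exactly the paper's. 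The one step you elide is why your scalar reduced function $d(\lambda,t)$ is \emph{real-valued} (needed to solve $d=0$ for the single real unknown $\lambda$ by the Implicit Function Theorem); the paper secures this from the identity $\langle\psi,F(\lambda,\psi)\rangle\in\R$ (property (e) of $F$ in Section \ref{sec:reduction}, hypothesis 5 of Theorem \ref{thm:bifurcation-theorem-1}).
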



\begin{proof} The proof of this theorem is a slight modification of a standard result from the bifurcation theory. It can be  found in Appendix \ref{sec:bifurcation-theorem},  Theorem \ref{thm:bifurcation-theorem-1}, 
whose hypotheses are satisfied by $F$ as shown above (see also \cite{Odeh, BGT}).
The latter theorem gives us a neighbourhood of $(1, 0)$ in $\R \times \mathscr{H}(\tau)$ such that the only non-trivial solutions are given by
\begin{equation*}
\begin{cases}
    \lambda_s =1+ g_\lambda(|s|^2), \\
    \psi_s = s\psi_0 + sg_\psi(|s|^2).
\end{cases}
\end{equation*}
Recall $a(\psi)$ defined in \eqref{eq:a=}. We now define 
    $\tilde{g}_A(s) = a(\psi_s),$ 
which is real-analytic and satisfies
$    \tilde{g}_A(-t) = a(-\psi_t 
    ) = \tilde{g}_A(t),$
and therefore is really a function of $t^2,\ g_A(t^2)$. Hence $A_s = A_0 + g_A(|s|^2)$.

	Finally, to prove \eqref{glambda'} we multiply the equation $F(\lambda, \psi)=0$ scalarly by $\psi_0$ and use that $L$ is self-adjoint and $(L - 1)\psi_0=0$ to obtain
\begin{equation*}
\langle \psi_0,( \lambda -1)\psi \rangle = 2i\langle \psi_0, a(\psi)\cdot\nabla_{A_0}\psi \rangle + \langle \psi_0, |a(\psi)|^2\psi \rangle + \kappa^2\langle\psi_0, |\psi|^2\psi \rangle.
\end{equation*}
Let $a_1:=g_A'(0)$. Substituting here the expansions obtained in the first part of the theorem, we find
  \begin{align} \label{glambda'1}
		g_\lambda'(0) \| \psi_0 \|^2 = 2i\int_{\Omega^\tau} \bar{\psi}_0 a_1\cdot\Cov{A_0}\psi_0 + \kappa^2\int_{\Omega^\tau} |\psi_0|^4.
	\end{align}
	In order to simplify this expression we first note that by differentiating \eqref{eq:F-2} w.r. to $|s|^2$ at $s=0$, we obtain
    \begin{equation*}
         \Curl^* \Curl a_1 = \Im(\bar{\psi}_0\Cov{A_0}\psi_0).
    \end{equation*}
    Now for the first term on the r.h.s. of \eqref{glambda'1}, taking the imaginary part of \eqref{glambda'1} we see that $\Re(\int_{\Omega^\tau} \bar{\psi}_0 a_1\cdot\Cov{A_0}\psi_0) = 0$ and therefore
    \begin{align*}
        2i \int_{\Omega^\tau} \bar{\psi}_0 a_1\cdot\Cov{A_0}\psi_0 \,dx
            &= -2 \int_{\Omega^\tau} a_1 \cdot \Im( \bar{\psi}_0\Cov{A_0}\psi_0 ) \,dx \\
            &= -2 \int_{\Omega^\tau} a_1 \cdot \Curl^* \Curl a_1 \,dx \\
            &= -2 \int_{\Omega^\tau} (\Curl a_1)^2 \,dx .
         \end{align*}
	Here in the second step we integrated by parts.
    \DETAILS{By substituting the expansions of $\psi^\tau$ and $A^\tau$ into \eqref{eq:F-2} (or \eqref{eq:a=}), we obtain the expansion
    \begin{equation}
        \Curl^* \Curl A^n_0 + \mu( M a^\tau_1 - \Im\{ \bar{\psi}^\tau_1\Cov{A^n_0}\psi^\tau_1 \} ) + o(\mu^2)=0,
    \end{equation}
  where  $\mu := \kappa^2 - b$.    We therefore have that}
  %
  %
	Next we show that
	\begin{equation}
	\label{eq:curl-psi_0}
		-\frac{1}{2}\Curl^* |\psi_0|^2 = \Im(\bar{\psi_0}\Cov{A_0}\psi_0).
	\end{equation}
	Using the notations of Appendix \ref{sec:operators} (with $n=1$), we have that $L_-\psi_0 = 0$:
	\begin{equation*}
		\partial_{x_1}\psi_0 + i\partial_{x_2}\psi_0 + \frac{1}{2}x_1\psi_0 + \frac{i}{2}x_2\psi_0 = 0
	\end{equation*}
	Multiplying this relation by $\bar{\psi}_0$ and subtracting and adding the complex conjugate of the result, we obtain the two relations
	\begin{equation*}
		\begin{cases}
			\bar{\psi_0}\partial_{x_1}\psi_0 - \psi_0\partial_{x_1}\bar{\psi}_0
					= -i\bar{\psi}_0\partial_{x_2}\psi_0 - i\psi_0\partial_{x_2}\bar{\psi}_0 - i x_2|\psi_0|^2, \\
			\bar{\psi_0}\partial_{x_2}\psi_0 - \psi_0\partial_{x_2}\bar{\psi}_0
					= i\bar{\psi}_0\partial_{x_1}\psi_0 + i\psi_0\partial_{x_1}\bar{\psi}_0 + i x_1|\psi_0|^2.
		\end{cases}
	\end{equation*}
	This means that
	\begin{align*}
		\Im(\bar{\psi_0}\Cov{A_0}\psi_0)
		&= \left(\begin{array}{c}
			-\frac{i}{2}( \bar{\psi_0}\partial_{x_1}\psi_0 - \psi_0\partial_{x_1}\bar{\psi}_0 ) + \frac{n}{2}x_2|\psi_0|^2 \\
			-\frac{i}{2}( \bar{\psi_0}\partial_{x_2}\psi_0 - \psi_0\partial_{x_2}\bar{\psi}_0 ) - \frac{n}{2}x_1|\psi_0|^2
			\end{array} \right) \\
		&= \left(\begin{array}{c}
			-\frac{1}{2}(\bar{\psi}_0\partial_{x_2}\psi_0 + \psi_0\partial_{x_2}\bar{\psi}_0) \\
			\frac{1}{2}(\bar{\psi}_0\partial_{x_1}\psi_0 + \psi_0\partial_{x_1}\bar{\psi}_0 )
			\end{array} \right) \\
		&= \left(\begin{array}{c}
			-\frac{1}{2} \partial_{x_2} |\psi_0|^2 \\
			\frac{1}{2} \partial_{x_1} |\psi_0|^2
			\end{array} \right),
	\end{align*}
	which gives \eqref{eq:curl-psi_0}.
	
	\eqref{eq:curl-psi_0} implies that $\Curl a_1 = -\frac{1}{2}|\psi_0|^2 + C$
    for some constant $C$. This $C$ can be determined using the fact that, since $A$ has mean zero, $a_1$ does as well, and this gives
    $C = \frac{1}{4\pi}\int_{\Omega^\tau} |\psi_0|^2$, which establishes 
    \begin{equation}
    \label{ABR:curlA1}
        \Curl a_1 = -\frac{1}{2}|\psi_0|^2 + \frac{1}{4\pi}\int_{\Omega^\tau} |\psi_0|^2 .
    \end{equation}
 Using this equation we finish the calculation above:
 \begin{align} \label{intaimpsinablapsi}
		 2i \int_{\Omega^\tau} \bar{\psi}_0 a_1\cdot\Cov{A_0}\psi_0 \,dx
		 = - \frac{1}{2} \int_{\Omega^\tau} |\psi_0|^4 \,dx + \frac{1}{4\pi} \left( \int_{\Omega^\tau} |\psi_0|^2 \,dx \right)^2.
    \end{align}
 Substituting this expression into \eqref{glambda'1} and rearranging terms we arrive at \eqref{glambda'}.
 And that completes the proof of Theorem \ref{thm:bifurcation-resultN1}.
\end{proof}
Theorem \ref{thm:bifurcation-resultN1} implies (I) - (III) of Theorem \ref{thm:result}.
$\qed$

Finally, we mention
\begin{lemma} \label{lem:tauanal}
Recall that $\Im\tau >0$. Let  $ (\lambda_s, \psi_s, A_s)$  be the solution branch constructed above and let  
$m_\tau =\\ (\sqrt{\Im\tau})^{-1}\left( \begin{array}{cc} 1 & \Re\tau \\ 0 & \Im\tau \end{array} \right)$. Then  $ (\lambda_s, \tilde{\psi}_s, \tilde{A}_s)$, where the functions $ ( \tilde{\psi}_s, \tilde{A}_s)$ are defined on a $\tau$-independent square lattice and are given by
	\begin{equation} \label{Utau}
		\begin{cases}
		\tilde{\psi}_s(x) =  \psi_s( m_\tau x), \\
		\tilde{A}_s(x) =  M_\tau^t A_s( m_\tau x), \\
		\end{cases}
	\end{equation}
 depend $\R$-analytically on $\tau$ .
    \end{lemma}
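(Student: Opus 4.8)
The plan is to realize the change of variables $x \mapsto m_\tau x$ as an isomorphism between the $\tau$-dependent function spaces $\mathscr{H}(\tau)$, $\vec{\mathscr{H}}(\tau)$ and fixed spaces attached to a standard square lattice, and then to transport the whole Lyapunov--Schmidt construction of Section \ref{sec:reduction} through this isomorphism, tracking the $\tau$-dependence at each stage. First I would observe that $m_\tau$ maps the square lattice $\mathbb{Z}^2$ (rescaled appropriately, so that the cell has area $2\pi n$) onto $\mathcal{L}^\tau$, so that pullback by $m_\tau$ carries the quasiperiodic boundary condition \eqref{reduced-quasiperiodic-bc} for $\mathcal{L}^\tau$ to a fixed, $\tau$-independent quasiperiodic boundary condition; here one uses that $t \wedge x$ transforms correctly under the linear map, since $m_\tau$ has determinant $1$, so the phase factor is preserved. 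The same computation shows $\tilde A_s$ defined via $M_\tau^t$ (I will write $M_\tau = m_\tau$, reading $M_\tau^t$ as its transpose) is divergence-free, mean-zero and periodic on the square lattice precisely when $A_s - A_0$ has those properties on $\mathcal{L}^\tau$; the factor $M_\tau^t$ is exactly what is needed so that $\Curl$ and $\Curl^*\Curl$ pull back to operators with the same structure (the magnetic potential transforms as a $1$-form). Thus conjugation by $m_\tau$ gives unitary (after a harmless constant rescaling) identifications $U_\tau : \mathscr{H}(\tau) \to \mathscr{H}_{\mathrm{sq}}$, $\vec U_\tau : \vec{\mathscr{H}}(\tau) \to \vec{\mathscr{H}}_{\mathrm{sq}}$.

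Next I would write down the conjugated equation. Under $U_\tau$ the rescaled Ginzburg--Landau map $F$ of \eqref{eq:F} becomes a new map $F^\tau := U_\tau F(\lambda, U_\tau^{-1}\,\cdot\,)$ on the fixed space $\mathscr{H}_{\mathrm{sq}}$, and the key point is that all $\tau$-dependence now sits inside the coefficients of the differential operators: $L^n = -\Delta_{A_0^n}$ pulls back to $-(m_\tau^{-1}\nabla - i A_0^n(m_\tau\,\cdot\,)) \cdot (m_\tau^{-1}\nabla - \cdots)$, whose coefficients are entries of $(m_\tau^t m_\tau)^{-1}$ and linear functions with coefficients from $m_\tau$ — all of which are real-analytic (indeed rational with nonvanishing denominator $\Im\tau > 0$) in $(\Re\tau, \Im\tau)$. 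Similarly $M = \Curl^*\Curl$ conjugates to an operator $M^\tau$ with entries real-analytic in $\tau$ and, by Theorem \ref{thm:M-theorem} transported through $\vec U_\tau$, still strictly positive, so $a(\psi)$ and hence $F$ conjugate to maps that are jointly real-analytic in $(\tau, \lambda, \psi)$. I would then re-run the argument of Section \ref{sec:reduction}: the Riesz projection $P^\tau$ onto $\Null(L^\tau - n)$ is given by the same contour integral $-\frac{1}{2\pi i}\oint_\gamma (L^\tau - z)^{-1}dz$, and since $z \mapsto (L^\tau - z)^{-1}$ is analytic in $\tau$ jointly with $z$ on the contour (the contour can be fixed because $\sigma(L^\tau) = \sigma(L^n)$ by Theorem \ref{thm:L-theorem} is $\tau$-independent), $P^\tau$ depends real-analytically on $\tau$. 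The implicit-function-theorem solution $w = w(\tau, \lambda, v)$ of the auxiliary equation \eqref{BT:eqn2} is then real-analytic in all its arguments by the analytic implicit function theorem, and the bifurcation equation \eqref{BT:bif-eqn} becomes a $\tau$-family $\gamma^\tau(\lambda, v) = 0$ with $\gamma^\tau$ real-analytic in $\tau$.

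Finally I would invoke the analytic version of the bifurcation theorem (Theorem \ref{thm:bifurcation-theorem-1} of the appendix, used already in the proof of Theorem \ref{thm:bifurcation-resultN1}) \emph{with parameter} $\tau$: for $n = 1$ the kernel $\Null(L^\tau - 1)$ is one (complex) dimensional, the simple-eigenvalue crossing hypothesis is stable under the $\tau$-perturbation, and the construction of the branch $(\lambda_s, \psi_s)$ — i.e. of $g_\lambda, g_\psi$ as functions of $|s|^2$ — goes through with $\tau$ as an extra analytic parameter, so $(\lambda_s, g_\lambda, g_\psi)$ and then, via $\tilde g_A(s) = a(\psi_s)$, also $g_A$ depend real-analytically on $\tau$ jointly with $|s|^2$. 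Undoing the identification, $(\lambda_s, \tilde\psi_s, \tilde A_s) = (\lambda_s, U_\tau\psi_s, \vec U_\tau A_s)$ lives in the fixed square-lattice space and is real-analytic in $\tau$, which is the claim. The main obstacle I expect is bookkeeping rather than conceptual: one must verify carefully that the transformation law \eqref{Utau}, with the specific factor $M_\tau^t$, is exactly the one under which $\Cov{A_0^n}$, $\Curl$, the quasiperiodicity phase $e^{\frac{in}{2}t\wedge x}$, and the mean-zero/divergence-free constraints are all simultaneously intertwined with their square-lattice counterparts (the determinant-one normalization built into $m_\tau$ is what makes this consistent), and to check that the normalization $\|\psi_0\|$-type quantities and the selection of $\psi_0 \in \Null(L^\tau - 1)$ can be made analytically in $\tau$ — this last point requires choosing the kernel element analytically, e.g. via $P^\tau$ applied to a fixed vector, rather than by any $\tau$-discontinuous normalization.
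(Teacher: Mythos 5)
Your proposal is correct and follows essentially the same route as the paper's (sketched) proof: conjugate by the pullback maps $U_\tau\psi = \psi(m_\tau\,\cdot)$ and $V_\tau a = m_\tau^t a(m_\tau\,\cdot)$ to a fixed square lattice, observe that the conjugated operators $L^n_\tau$, $M_\tau$ and hence the whole Lyapunov--Schmidt reduction depend real-analytically on $\tau$ (using $\det m_\tau = 1$ and $V_\tau A^n_0 = A^n_0$), and rerun the bifurcation argument with $\tau$ as an analytic parameter. The details you supply on the fixed-contour Riesz projection and the analytic choice of $\psi_0$ are precisely the ones the paper leaves to the reader.
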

We sketch the proof of this lemma. The transformation above maps functions on a lattice of the shape $\tau$ into functions on a $\tau$-independent square lattice, but leads to a slightly more complicated expression for the Ginzburg-Landau equations. Namely, let  $U_\tau \psi (x):= \psi( m_\tau x)$ and  $V_\tau a(x):= m_\tau^t a( m_\tau x)$. Applying $U_\tau$ and $V_\tau$ to the equations \eqref{eq:F-1} and \eqref{eq:F-2}, we conclude that $ ( \tilde{\psi}_s, \tilde{A}_s)$ satisfy the equations
\begin{subequations}
    \begin{equation}
    \label{eq:F-1'}
        (L^n_\tau - \lambda)\psi
        + 2i (m_{\tau}^{t})^{-1}a\cdot (m_{\tau}^{t})^{-1}\nabla_{A^n_0}\psi 
        + |(m_{\tau}^{t})^{-1}a|^2\psi + \kappa^2|\psi|^2\psi =0,
    \end{equation}
    \begin{equation}
    \label{eq:F-2'}
        (M_\tau + |\psi|^2)a
        + \tilde{F}^a_\tau(\psi) = 0,
    \end{equation}
\end{subequations}
where 
\begin{equation}
\label{OP:LN'}
    L^n_\tau := -U_\tau\Delta_{A^n_0}U_\tau^{-1}
\mbox{ and }
    M_\tau := V_\tau\Curl^*\Curl V_\tau^{-1}.
\end{equation}
Here we used that $V_\tau A^n_0= A^n_0$ and $U_\tau \nabla \psi = (m_{\tau}^{t})^{-1}U_\tau  \psi$. (The latter relation is a straightforward computation and the former one follows from the facts that for any matrix $m,\ (mx)^\perp = (\det m) (m^t)^{-1} x^\perp$, and that in our case, $\det m_\tau = 1$.) Note that the gauge in the periodicity condition will still depend on $\Im\tau$. These complications, however, are inessential and the same techniques as above can be applied in this case. The important point here is to observe that the function $\psi_0$, constructed in Appendix B, the function $w(\lambda, s\psi_0)$, where $w(\lambda, v)$ is  the solution of \eqref{BT:eqn2}, and the bifurcation equation \eqref{BT:bif-eqn} depend on $\tau$ real-analytically. We leave the details of the proof to the interested reader.


\section{Abrikosov Function}
\label{S:ABR}

In this section, we continue with the case $n = 1$. We prove the Abrikosov relation between the energy per cell and the Abrikosov function, $\beta(\tau)$, which is defined by
\begin{equation} \label{beta}
    \beta(\tau) := \frac{ \int_{\Omega^\tau} |\psi^\tau_0|^4 }{ \left( \int_{\Omega^\tau} |\psi^\tau_0|^2 \right)^2 },
\end{equation}
where $\psi^\tau_0$ is a non-zero element in the nullspace of  $L^n - 1$ acting on $\Hpsi{2}{n}$. Since the nullspace is a one-dimensional complex
subspace, $\beta$ is well-defined.

Recall that $b= \frac{\kappa^2}{\lambda}.$ 
Since the function $g_\lambda(|s|^2)$ given in Theorem \ref{thm:bifurcation-resultN1} obeys  $g_\lambda(0) = 0$ and $g_\lambda '(0) \neq 0$, the function $b_s = \kappa^2 (1 + g_\lambda(|s|^2))^{-1} =: \kappa^2 + g_b(|s|^2)$
can be inverted to obtain 
$|s| = s(b)$. Absorbing $\hat{s} = \frac{s}{|s|}$ into $\psi^\tau_0$, we can define the family $(\psi^\tau_{s(b)}, A^\tau_{s(b)}, b^\tau_{s(b)})$ of $\cL^\tau$-periodic solutions of the Ginzburg-Landau equations parameterized by average magnetic flux $b$ and their energy
$$E_b(\tau) := \mathcal{E}_{\kappa^2/b}(\psi^\tau_{s(b)}, A^\tau_{s(b)}).$$ 
Clearly, $\psi^\tau_{s(b)}, A^\tau_{s(b)}, b^\tau_{s(b)}$ are analytic in $b$. 
We note the relation between the new perturbation parameter $\mu := \kappa^2 - b$ and the bifurcation parameter $|s|^2:$ \begin{equation}
    \label{mus}
         \mu= \frac{g_\lambda(|s|^2)}{\lambda}\kappa^2= g'_\lambda(0)\kappa^2 |s|^2 +O(|s|^4).
    \end{equation}
The relation between the Abrikosov function and the energy of the Abrikosov
lattice solutions is as follows.

\begin{theorem}
\label{ABR:thm1}
    In the case $\kappa > \frac{1}{\sqrt{2}}$, the minimizers,  $\tau_b$, of $\tau \mapsto E_b(\tau)$ are related to the minimizer,  $\tau_*$, of $\beta(\tau)$, as $\tau_b - \tau_* =O(\mu^{1/2})$, In particular, $\tau_b \to \tau_*$ as $b \to \kappa^2$.
\end{theorem}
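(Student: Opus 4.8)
The plan is to expand the energy $E_b(\tau)$ of the bifurcating branch of Theorem~\ref{thm:bifurcation-resultN1} to second order in the small parameter $\mu = \kappa^2 - b$ and to recognise that, modulo a $\tau$-independent constant, the leading term is a strictly monotone function of the Abrikosov ratio $\beta(\tau)$; the theorem then reduces to a perturbation-of-minimizer argument for $\beta$.

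First I would put the energy into closed form using that the branch solves \eqref{eq:rGL}. Write $A_s = A_0 + a_s$ with $a_s := g_A(|s|^2) = a(\psi_s)$, and recall $|\Omega^\tau| = 2\pi$ and $\Curl A_0 = 1$. Testing \eqref{rGLpsi} against $\bar\psi_s$ gives $\int_{\Omega^\tau}|\Cov{A_s}\psi_s|^2 = \lambda\int_{\Omega^\tau}|\psi_s|^2 - \kappa^2\int_{\Omega^\tau}|\psi_s|^4$; since $a_s$ is periodic, $\int_{\Omega^\tau}\Curl a_s = 0$ and hence $\int_{\Omega^\tau}|\Curl A_s|^2 = 2\pi + \int_{\Omega^\tau}(\Curl a_s)^2$; and expanding the last term of \eqref{eq:G-energy} gives $\int_{\Omega^\tau}\tfrac{\kappa^2}{2}(|\psi_s|^2-\tfrac{\lambda}{\kappa^2})^2 = \tfrac{\pi\lambda^2}{\kappa^2} - \lambda\int_{\Omega^\tau}|\psi_s|^2 + \tfrac{\kappa^2}{2}\int_{\Omega^\tau}|\psi_s|^4$. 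Adding these three contributions and using $\kappa^4/\lambda^2 = b^2$, the functional \eqref{eq:G-energy} collapses to
\begin{equation*}
E_b(\tau) = b^2 + \frac{\kappa^2}{2} + \frac{\kappa^4}{2\pi\lambda^2}\left( \int_{\Omega^\tau}(\Curl a_s)^2 - \frac{\kappa^2}{2}\int_{\Omega^\tau}|\psi_s|^4 \right),
\end{equation*}
where the first two terms are precisely the energy of the normal state and the whole $\tau$-dependence sits in the bracket.

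Next I would substitute $\psi_s = s\psi_0 + O(|s|^3)$ and $a_s = |s|^2 a_1 + O(|s|^4)$, $a_1 = g_A'(0)$, from Theorem~\ref{thm:bifurcation-resultN1}. Then $\int_{\Omega^\tau}|\psi_s|^4 = |s|^4\int_{\Omega^\tau}|\psi_0|^4 + O(|s|^6)$, while \eqref{ABR:curlA1} gives $\int_{\Omega^\tau}(\Curl a_s)^2 = |s|^4\big( \tfrac14\int_{\Omega^\tau}|\psi_0|^4 - \tfrac{1}{8\pi}(\int_{\Omega^\tau}|\psi_0|^2)^2 \big) + O(|s|^6)$. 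Using $\tfrac14-\tfrac{\kappa^2}{2} = -\tfrac12(\kappa^2-\tfrac12)$ and comparing with \eqref{glambda'}, the bracket becomes $-\tfrac12|s|^4 g_\lambda'(0)\int_{\Omega^\tau}|\psi_0|^2 + O(|s|^6)$. Now I would pass to the parameter $\mu$ via \eqref{mus}, i.e. $|s|^2 = \mu\,(\kappa^2 g_\lambda'(0))^{-1} + O(\mu^2)$ and $\lambda\to1$, and use the identity $\int_{\Omega^\tau}|\psi_0|^2 \big/ g_\lambda'(0) = \big[(\kappa^2-\tfrac12)\beta(\tau)+\tfrac{1}{4\pi}\big]^{-1}$, which is exactly \eqref{glambda'} divided by $(\int_{\Omega^\tau}|\psi_0|^2)^2$ together with the definition \eqref{beta}. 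This yields
\begin{equation*}
E_b(\tau) = b^2 + \frac{\kappa^2}{2} - \frac{\mu^2}{4\pi}\,\frac{1}{(\kappa^2-\tfrac12)\beta(\tau) + \tfrac{1}{4\pi}} + O(\mu^3),
\end{equation*}
and the essential point is that this holds \emph{uniformly} for $\tau$ ranging over a fixed compact subdomain of the fundamental domain. Uniformity is where Lemma~\ref{lem:tauanal} enters: after the $m_\tau$-rescaling the branch, and hence $E_b(\tau)$, depend jointly real-analytically on $(\mu,\tau)$ there, so all remainders are controlled uniformly.

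Finally, since $\kappa^2 > \tfrac12$, the function $\beta\mapsto -\big[(\kappa^2-\tfrac12)\beta+\tfrac{1}{4\pi}\big]^{-1}$ is strictly increasing, so the leading $\tau$-dependent term of $E_b$ is minimal exactly where $\beta(\tau)$ is, namely at $\tau_*$. To get the rate I would first confine the competition to a compact set: $\beta(\tau)\to\infty$ as $\Im\tau\to\infty$, so both $\tau_*$ and, for $\mu$ small, every minimizer $\tau_b$ of $E_b$ lie in a fixed compact $D_0$ on which the expansion above is valid. Writing $f_0(\tau) := -\big[(\kappa^2-\tfrac12)\beta(\tau)+\tfrac{1}{4\pi}\big]^{-1}$, the inequality $E_b(\tau_b)\le E_b(\tau_*)$ together with the expansion forces $f_0(\tau_b) - f_0(\tau_*) = O(\mu)$; combined with the quadratic lower bound $f_0(\tau)-f_0(\tau_*)\ge c\,|\tau-\tau_*|^2$ near $\tau_*$ — i.e. the non-degeneracy of the minimum of $\beta$, a known analytic property of the Abrikosov function — this gives $|\tau_b-\tau_*|^2 = O(\mu)$, that is $\tau_b - \tau_* = O(\mu^{1/2})$, whence $\tau_b\to\tau_*$ as $b\to\kappa^2$. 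I expect the main obstacle to be precisely the bookkeeping that makes the $O(\mu^3)$ remainder uniform in $\tau$ across the branch (via Lemma~\ref{lem:tauanal}), together with cleanly importing the three qualitative facts about $\beta$ used in the last step: its real-analyticity, its blow-up at the cusp, and the non-degeneracy of its minimizer.
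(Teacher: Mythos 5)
Your proposal is correct and follows essentially the same route as the paper: expand $E_b(\tau)$ to order $\mu^2$ using the $\psi$-equation tested against $\psi$, the formula \eqref{ABR:curlA1} for $\Curl a_1$, and the normalization of $\psi_0$ coming from the solvability condition (which you import via \eqref{glambda'} and \eqref{mus} rather than re-deriving it, as the paper does, by differentiating \eqref{eq:F-1} twice in $\mu^{1/2}$), arriving at the same expansion \eqref{Ebeta} up to a $\tau$-independent $O(\mu^2)$ constant. The concluding perturbation-of-minimizer step is also the paper's (quadratic non-degeneracy of the minimum plus the $O(\mu)$ perturbation of $e_2$ gives $|\tau_b-\tau_*|^2=O(\mu)$); your explicit attention to uniformity in $\tau$ via Lemma \ref{lem:tauanal} and confinement to a compact set is a welcome refinement the paper leaves implicit.
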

\begin{proof}
We first show that the theorem is a consequence of the following proposition, which is proved below.
\begin{proposition}
\label{prop:Ebeta}
	We have
            \begin{equation}
    \label{Ebeta}
        E_b(\tau) = \frac{\kappa^2}{2} + \kappa^4 - 2\kappa^2\mu +
        	\left( \frac{\kappa^4}{4\pi} - \frac{1}{1 + 4\pi(\kappa^2 - \frac{1}{2})\beta(\tau)} \right)\mu^2 + O(\mu^3).
    \end{equation}
    \end{proposition}

To prove the theorem we note that  $E_b(\tau)$ is of the form $E_b(\tau) = e_0 + e_1 \mu + e_2(\tau) \mu^2 + O(\mu^3)$. The first two terms are constant in $\tau$, so we consider $\tilde{E}_b(\tau) = e_2(\tau) + O(\mu)$. $\tau_b$ is also the minimizer of $\tau \mapsto \tilde{E}_b(\tau)$ and $\tau_*$, of $e_2(\tau)$. We have the expansions  $\tilde{E}_b(\tau_*)-\tilde{E}_b(\tau_b) = \frac{1}{2}\tilde{E}^{''}_b(\tau_b)(\tau_*-\tau_b)^2 + O((\tau_*-\tau_b)^3)$ and  $\tilde{E}_b(\tau_*)-\tilde{E}_b(\tau_b) = -\frac{1}{2}e^{''}_2(\tau_b)(\tau_*-\tau_b)^2 + O((\tau_*-\tau_b)^3) + O(\mu)$, which imply the desired result. That concludes the proof of the theorem.
\end{proof}
\DETAILS{    Now suppose that there is a sequence $b_k \to b_c$ such that $\tau_{b_k} \to \sigma$ with $\sigma \neq \tau_*$. Since $\tilde{E}_b(\tau_{b_k}) \to e_2(\sigma)$, for any $\epsilon > 0$, $\tilde{E}_{b_k}(\tau_{b_k}) \geq e_2(\sigma) - \epsilon$ for sufficiently large $k$. By continuity we also know that for all $\epsilon' > 0$, $\tilde{E}_b(\tau_*) \leq e_2(\tau_*) + \epsilon'$ for $b$ sufficiently close to $b_c$. This implies that for any $\epsilon, \epsilon' > 0$, for all sufficiently large $k$ we have
    \begin{equation*}
    	\tilde{E}_{b_k}(\tau_*) \leq e_2(\tau_*) + \epsilon'
    		\leq e_2(\sigma) + (e_2(\tau_*) - e_2(\sigma)) + \epsilon'
    		\leq \tilde{E}_{b_k}(\tau_{b_k}) + (e_2(\tau_*) - e_2(\sigma)) + \epsilon + \epsilon'.
    \end{equation*}
    But by the minimality of $\tau_star$, $e_2(\tau_*) - e_2(\sigma) < 0$, so for sufficiently small $\epsilon, \epsilon'$, we would have $\tilde{E}_{b_k}(\tau_*) < \tilde{E}_{b_k}(\tau_{b_k})$, contradicting the minimality of $\tau_{b_k}$.

    Thus $\tau_b \to \tau_*$ as $b \to b_c$ but it is clear from \eqref{Ebeta} that if $\kappa^2 > 1/2$, then $\tau_*$ minimizes $e_2(\tau)$ if and only if $\tau_*$ minimizes $\beta(\tau)$.}

\begin{proof}[Prof of Proposition \ref{prop:Ebeta}]
 Recall that  $\mu := \kappa^2 - b$.    Using the real-analyticity of the function $g_b$, $g^\tau_\psi$, and $g^\tau_A$, we can express $\lambda(\mu) := \kappa^2 / b$, $\psi^\tau(\mu) := \psi^\tau_{s(b)}$ and
    $A^\tau(\mu) := A^\tau_{s(b)}$ as
    \begin{align}
    	&\lambda(\mu) = 1 + \frac{1}{\kappa^2} \mu + O(\mu^2) \\
        \label{Exppsimu}&\psi^\tau(\mu) = \mu^{1/2}\psi^\tau_0 + \mu^{3/2}\psi^\tau_1 + O(\mu^{5/2}) \\
        &A^\tau(\mu) = A_0 + \mu a^\tau_1 + O(\mu^2). \label{Expamu}
    \end{align}
    We will first show that
    \begin{equation}
    \label{ABR:Ebomu}
   	    E_b(\tau) = \frac{\kappa^2}{2} + \kappa^4
				 	- 2\kappa^2 \mu
	    	+ \frac{\kappa^4}{4\pi} \left( 1 - \frac{1}{\kappa^2} \int_{\Omega^\tau} |\psi_0^\tau|^2  \right) \mu^2 + O(\mu^3).
    \end{equation}
   Multiplying \eqref{rGLpsi} scalarly by $\psi$ and integrating by parts gives
   \begin{equation*}
  		\int_{\Omega^\tau} |\nabla_A \psi^\tau|^2 = \kappa^2 \int_{\Omega^\tau} \left(\lambda|\psi^\tau|^2 - \kappa^2|\psi^\tau|^4\right).
	\end{equation*}
   Substituting this into the expression for the energy we find
   \begin{equation}
   \label{ABR:Ebomu'}
		E_b(\tau) = \frac{\kappa^4}{2\pi\lambda^2} \int_{\Omega^\tau} \left(\frac{\lambda^2}{2\kappa^2} - \frac{\kappa^2}{2} |\psi^\tau|^4 + |\Curl A^\tau|^2\right).
    \end{equation}
    Using the expansion above gives
     \begin{equation}
    \label{ABR:Ebomu''}
	    E_b(\tau) = \frac{\kappa^2}{2} + \kappa^4
				 	- 2\kappa^2 \mu
	    	+ \frac{\kappa^4}{2\pi} \left( 1 - \frac{\kappa^2}{2} \int_{\Omega^\tau} |\psi_0^\tau|^4 + \int_{\Omega^\tau} |\Curl a_1^\tau|^2 \right)\mu^2
	    		+ O(\mu^3),
    \end{equation}
    where we have used the fact that $\Curl A_0 = 1$.
    As we proved above in \eqref{ABR:curlA1}, we can show that
    \begin{equation*}
    	\Curl a^\tau_1 = -\frac{1}{2}|\psi^\tau_0|^2 + \frac{1}{4\pi} \int_{\Omega^\tau} |\psi^\tau_0|^2.
    \end{equation*}
    Substituting this expression into \eqref{ABR:Ebomu''}, we obtain
    \begin{equation}
    \label{ABR:Ebomu'''}
    	    E_b(\tau) = \frac{\kappa^2}{2} + \kappa^4
				 	- 2\kappa^2 \mu
	    	+ \frac{\kappa^4}{4\pi} \left( 1 - \left(\kappa^2 - \frac{1}{2}\right) \int_{\Omega^\tau} |\psi_0^\tau|^4
	    			- \frac{1}{4\pi}\left(\int_{\Omega^\tau} |\psi^\tau_0|^2 \right)^2 \right) \mu^2 + O(\mu^3).
    \end{equation}

	

%
%
  \DETAILS{  In order to establish this result, we will need to first show that
    \begin{equation}
    \label{ABR:curlA1}
        \Curl A^\tau_1 = -\frac{1}{2}|\psi^\tau_0|^2 + \frac{1}{4\pi}\int_{\Omega^\tau} |\psi^\tau_0|^2
    \end{equation}
    By substituting the series expansions of $\psi^\tau$ and $A^\tau$ into $F_A$, we obtain the expansion
    \begin{equation}
        F_A(\psi^\tau,A^\tau - A^n_0) = M A^n_0 + \mu( M A^\tau_1 - \Im\{ \bar{\psi}^\tau_1\Cov{A^n_0}\psi^\tau_1 \} ) + o(\mu^2).
    \end{equation}
    We therefore have that
    \begin{equation*}
        M A^\tau_1 = \Curl^* \Curl A^\tau_1 = \Im(\bar{\psi}^\tau_1\Cov{A^n_0}\psi^\tau_1).
    \end{equation*}
    If we let $f := -\frac{1}{2}|\psi_0|^2$, a calculation gives that $\Curl^* f = \Im\{ \bar{\psi_0}\Cov{A_0}\psi_0 \}$, and therefore $\Curl A^\tau_1 = f + C$
    for some constant $C$. This $C$ can be determined using the fact that, since $A$ has mean zero, $A_1$ does as well, and this gives
    $C = \frac{1}{4\pi}\int_\Omega |\psi_0|^2$, which establishes \eqref{ABR:curlA1}

    We now turn to \eqref{ABR:p-beta}. As before}
%
%
     %
     %
     \DETAILS{substitute the expansions for $\psi^\tau$ and $A^\tau$ into \eqref{eq:F-1} to obtain the expression
    \begin{multline*}
        \mu^{1/2}(L^n - 1)\psi^\tau_0 - \mu \frac{1}{\kappa^2} \psi^\tau_1
                \\+ \mu^{\frac{3}{2}}( (L^n - 1)\psi^\tau_1 - \frac{1}{\kappa^2}\psi^\tau_0 + \kappa^2|\psi^\tau_0|^2\psi^\tau_0
                        - 2ia^\tau_1\cdot\nabla\psi^\tau_0 - 2A^n_0\cdot a^\tau_1 \psi^\tau_0)
                + o(\mu^{\frac{5}{2}}) =0,
    \end{multline*}
    and so we must have}
    Now, if we differentiate  \eqref{eq:F-1} twice w.r. to $\mu^{1/2}$ at $\mu=0$ to obtain
    \begin{equation*}
        (L - 1)\psi^\tau_1 = \frac{1}{\kappa^2}\psi^\tau_0 - \kappa^2|\psi^\tau_0|^2\psi^\tau_0 - 2ia^\tau_1\cdot\Cov{A_0}\psi^\tau_0.
    \end{equation*}
    Now using the fact that $L - 1$ is self-adjoint and that $(L - 1)\psi^\tau_0 = 0$, we find
    \begin{align*}
        0 &= \int_{\Omega^\tau} \bar{\psi}^\tau_0(L - 1)\psi^\tau_1 \,dx \\
            &= \int_{\Omega^\tau} \bar{\psi}^\tau_0(\frac{1}{\kappa^2}\psi^\tau_0 - \kappa^2|\psi^\tau_0|^2\psi^\tau_0 - 2ia^\tau_1\cdot\Cov{A_0}\psi^\tau_0) \,dx \\
            &= \frac{1}{\kappa^2} \int_{\Omega^\tau} |\psi^\tau_0|^2 \,dx - \kappa^2 \int_{\Omega^\tau} |\psi^\tau_0|^4 \,dx
                - \int_{\Omega^\tau} \bar{\psi}^\tau_0 (2ia^\tau_1\cdot\Cov{A_0}\psi^\tau_0) \,dx.
    \end{align*}
\DETAILS{For the third term, we integrate by parts and note that boundary terms disappear because of their periodicity as well as noting that
    $\Div A^\tau_1 = 0$ to calculate that
    \begin{align*}
        \int_{\Omega^\tau} \bar{\psi}^\tau_0 (2iA^\tau_1\cdot\Cov{A^n_0}\psi^\tau_0) \,dx
            &= -2 \int_{\Omega^\tau} A^\tau_1 \cdot \Im( \bar{\psi}^\tau_0\Cov{A^n_0}\psi^\tau_0 ) \,dx \\
            &= -2 \int_{\Omega^\tau} A^\tau_1 \cdot \Curl^* \Curl A^\tau_1 \,dx \\
            &= -2 \int_{\Omega^\tau} (\Curl A^\tau_1)^2 \,dx \\
            &= -2 \int_{\Omega^\tau} \left( -\frac{1}{2}|\psi^\tau_0|^2 + \frac{1}{4\pi}\int_{\Omega^\tau} |\psi^\tau_0|^2 \,dx \right)^2 \,dx \\
            &= -\frac{1}{2} \int_{\Omega^\tau} |\psi^\tau_0|^4 \,dx + \frac{1}{4\pi} \left( \int_{\Omega^\tau} |\psi^\tau_0|^2 \,dx \right)^2.
    \end{align*}
    Therefore}
    %
    %
	An analogous calculation to the one in the proof of Theorem \ref{thm:bifurcation-resultN1} then gives
    \begin{equation*}
        0 = \frac{1}{\kappa^2} \int_{\Omega^\tau} |\psi^\tau_0|^2 \,dx - \kappa^2 \int_{\Omega^\tau} |\psi^\tau_0|^4 \,dx
                + \frac{1}{2} \int_{\Omega^\tau} |\psi^\tau_0|^4 \,dx  - \frac{1}{4\pi} \left( \int_{\Omega^\tau} |\psi^\tau_0|^2 \,dx \right)^2,
    \end{equation*}
    This relation gives \eqref{ABR:Ebomu} from \eqref{ABR:Ebomu'''}, but also by dividing by $\left( \int |\psi^\tau_0|^2 \right)^2$ and rearranging we then obtain \eqref{Ebeta}.
    \end{proof}

The following result was discovered numerically in the physics literature and proven in \cite{ABN} using earlier result of \cite{NV}:

\begin{theorem}
\label{ABR:thm2}
    The function $\beta(\tau)$ has exactly two critical points, $\tau = e^{i\pi/3}$ and $\tau = e^{i\pi/2}$. The first is minimum, whereas the
    second is a maximum.
\end{theorem}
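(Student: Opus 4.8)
The plan is to pass from the abstract definition \eqref{beta} to an explicit formula for $\beta(\tau)$ in terms of classical theta functions, after which the statement is the quantitative modular analysis of \cite{NV, ABN}. \emph{Step 1 (explicit form of $\psi^\tau_0$).} By Theorem \ref{thm:L-theorem} with $n=1$, $\Null(L-1)$ is one complex-dimensional, and, as recalled in the proof of Theorem \ref{thm:bifurcation-resultN1}, a nonzero element $\psi^\tau_0$ satisfies $L_-\psi^\tau_0=0$; in the complex coordinate $z=x_1+ix_2$ this reads $2\partial_{\bar z}\psi^\tau_0+\tfrac12 z\,\psi^\tau_0=0$, so $\psi^\tau_0(z)=e^{-|z|^2/4}\,\Theta_\tau(z)$ with $\Theta_\tau$ entire, and the quasiperiodicity \eqref{reduced-quasiperiodic-bc} turns into the classical automorphy law of a Jacobi theta function for the lattice $\cL^\tau$ (of covolume $2\pi$). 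Since that space of holomorphic functions is one-dimensional, $\Theta_\tau$ is, up to a constant and a Gaussian factor, the theta function of $\cL^\tau$; a winding-number count shows $\psi^\tau_0$ has exactly one zero per cell. Substituting into \eqref{beta} and integrating over $\Omega^\tau$ (using the automorphy law, equivalently Poisson summation on the Gaussian-times-theta integrand) expresses $\beta(\tau)$ as a rapidly convergent, alternating lattice sum over the dual lattice --- the series of Abrikosov and of Kleiner--Roth--Autler. In particular $\beta$ is real-analytic on $\mathbb{H}=\{\Im\tau>0\}$.

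\emph{Step 2 (symmetry and the cusp).} By construction $\beta(\tau)$ depends on $\cL^\tau$ only through its class up to rotations and dilations, hence only on the $PSL(2,\Z)$-orbit of $\tau$; moreover the reflection $\tau\mapsto-\bar\tau$ leaves $|\psi^\tau_0|$, hence $\beta$, unchanged. Thus $\beta$ is a real-analytic function on the modular orbifold $\mathbb{H}/PSL(2,\Z)$, whose only orbifold points are $\tau=e^{i\pi/3}$ (isotropy of order $3$) and $\tau=i=e^{i\pi/2}$ (order $2$). For any smooth invariant function these two points are automatically critical, since the isotropy generator acts on the tangent plane as multiplication by $e^{2\pi i/3}$, respectively $-1$, and a linear functional invariant under either of these vanishes. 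Finally, as $\Im\tau\to\infty$ the lattice $\cL^\tau$ degenerates and $|\psi^\tau_0|$ becomes, after rescaling, a one-dimensional Gaussian, so $\beta(\tau)\to\infty$; hence $\beta$ attains its infimum over $\mathbb{H}$ at an interior point.

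\emph{Step 3 (no other critical points; which is the minimum).} It remains to prove that, among $\tau$ normalized as in (i)--(iii) of Section \ref{sec:lattice states}, $e^{i\pi/3}$ and $i$ are the \emph{only} critical points. Working in the closed standard fundamental domain $\{|\Re\tau|\le\tfrac12,\ |\tau|\ge1\}$, the reflections across $\Re\tau=\tfrac12$ and across $|\tau|=1$ (both of which preserve $\beta$) force the normal derivative of $\beta$ to vanish along these boundary arcs, so a boundary critical point other than $e^{i\pi/3}$ or $i$ would be an interior critical point of the restriction of $\beta$ either to the segment $\{\Re\tau=\tfrac12\}$ or to the arc $\{|\tau|=1,\ \pi/3\le\arg\tau\le\pi/2\}$. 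One shows that $\beta$ is \emph{strictly monotone} along each of these two arcs and has \emph{no} critical point in the open fundamental domain; this is exactly the content established in \cite{NV, ABN}, obtained by differentiating the alternating lattice series of Step 1 and --- since the signs of the individual terms do not settle the sign of the derivative --- controlling it by an appropriate grouping of terms (equivalently, a log-convexity property of the quotient of theta constants entering $\beta$). Once the critical set is pinned down, Step 2 forces one of $e^{i\pi/3},\,i$ to be the global minimum; evaluating the series gives $\beta(e^{i\pi/3})<\beta(i)$, so $e^{i\pi/3}$ is the minimum and $i$, being the only remaining critical point, is the maximum. The main obstacle is precisely this last monotonicity-and-absence statement along the boundary arcs; everything up through Step 2 is soft, and we quote \cite{NV, ABN} for Step 3.
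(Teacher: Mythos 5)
Your proposal is essentially the paper's approach: the paper offers no proof of Theorem \ref{ABR:thm2} at all and simply cites \cite{ABN} (which relies on \cite{NV}), and your Step 3 --- the absence of critical points away from $e^{i\pi/3}$ and $e^{i\pi/2}$ and the identification of which is the minimum --- defers to exactly the same references. Your Steps 1 and 2 (the theta-function representation of $\psi^\tau_0$ via $L_-\psi^\tau_0=0$, the modular invariance of $\beta$, and the automatic criticality at the two orbifold points) are correct and consistent with the paper's Appendix B, but the entire mathematical content of the theorem still rests on the quoted results, just as in the paper.
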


Theorems \ref{thm:bifurcation-resultN1}, 
\ref{ABR:thm1}, \ref{ABR:thm2}, after rescaling to the original variables, imply Theorem \ref{thm:result}, which, as was mentioned above, a precise restatement of Theorem \ref{thm:main-result} of Introduction.


%
%
%

\appendix


\section{Bifurcation with Symmetry}
\label{sec:bifurcation-theorem}
In this appendix we present a variant of a standard result in Bifurcation Theory.
\begin{theorem}
\label{thm:bifurcation-theorem-1}
    Let $X$ and $Y$ be complex Hilbert spaces, with $X$ a dense subset of $Y$, and consider a map $F : \R \times X \to Y$ that
    is analytic as a map between real Banach spaces. Suppose that for some $\lambda_0 \in \R$, the following conditions are satisfied:
    \begin{enumerate}
        \item \label{BT:trivial} $F(\lambda, 0) = 0$ for all $\lambda \in \R$,
        \item \label{BT:L-0} $D_\psi F(\lambda_0, 0)$ is self-adjoint and has an isolated eigenvalue at $0$ of (geometric) multiplicity $1$,
		\item \label{BT:v-star} For non-zero $v \in \Null D_\psi F(\lambda_0, 0)$,
                    $\langle v, D_{\lambda,\psi}F(\lambda_0, 0)v \rangle \neq 0$,
        \item \label{BT:sym} For all $\alpha \in \R$, $F(\lambda, e^{i\alpha}\psi) = e^{i\alpha}F(\lambda, \psi)$.
		\item \label{BT:SA} For all $\psi \in X$, $\langle \psi, F(\lambda, \psi) \rangle \in \R$.
    \end{enumerate}
    Then $(\lambda_0, 0)$ is a bifurcation point of the equation $F(\lambda, \psi) = 0$. In fact, there is a family of non-trivial solutions, $(\lambda, \psi)$, unique in a neighbourhood of $(\lambda_0, 0)$ in $\R \times X$, and this family has the form
    \begin{equation*}
    \begin{cases}
        \lambda = \phi_\lambda(|s|^2), \\
        \psi = sv + s\phi_\psi(|s|^2),
    \end{cases}
    \end{equation*}
    for $s \in \C$ with $|s| < \epsilon$, for some $\epsilon >0$. Here $v \in \Null D_\psi F(\lambda_0, 0)$, 
     and  $\phi_\lambda : [0,\epsilon) \to \R$ and $\phi_\psi : [0,\epsilon) \to X$ are unique real-analytic functions,
    such that $\phi_\lambda(0) = \lambda_0$, $\phi_\psi(0) = 0$. 
\end{theorem}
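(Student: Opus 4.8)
The plan is to run a Lyapunov--Schmidt reduction and exploit the $U(1)$-symmetry to reduce the bifurcation equation to a single scalar equation depending only on $|s|^2$. First I would set $L := D_\psi F(\lambda_0, 0)$, which by hypothesis \eqref{BT:L-0} is self-adjoint with $0$ an isolated eigenvalue of geometric multiplicity $1$; let $v$ span $\Null L$, let $P$ be the orthogonal (Riesz) projection onto $\Null L$ and $Q = I - P$. Write $\psi = sv + w$ with $s \in \C$ and $w \in \operatorname{ran} Q$, where I normalize so that $Pw = 0$. Then $F(\lambda, \psi) = 0$ splits into $QF(\lambda, sv + w) = 0$ and $PF(\lambda, sv + w) = 0$. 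Since $D_w\bigl(QF\bigr)(\lambda_0, 0, 0) = QLQ$ is an isomorphism on $\operatorname{ran} Q$ (as $0$ is isolated in $\sigma(L)$), the analytic Implicit Function Theorem (valid for maps analytic between real Banach spaces) yields a unique real-analytic $w = w(\lambda, s)$ near $(\lambda_0, 0)$ with $w(\lambda, 0) = 0$, solving the $Q$-equation. Substituting back gives the scalar bifurcation equation $\gamma(\lambda, s) := PF(\lambda, sv + w(\lambda, s)) = 0$, taking values in the one-dimensional space $\Null L \cong \C$.

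Next I would use the symmetry. Exactly as in the Lemma proved in the main text, hypothesis \eqref{BT:sym} and uniqueness of $w$ give $w(\lambda, e^{i\alpha}s) = e^{i\alpha}w(\lambda, s)$, hence $\gamma(\lambda, e^{i\alpha}s) = e^{i\alpha}\gamma(\lambda, s)$. Writing $\gamma(\lambda, s) = c(\lambda, s)\,v$ with $c : \R \times \C \to \C$, this means $c(\lambda, e^{i\alpha}s) = e^{i\alpha}c(\lambda, s)$; in particular for $s \neq 0$ we may write $c(\lambda, s) = (s/|s|)\,\tilde{c}(\lambda, |s|)$ and the equation $c = 0$ is equivalent to $\tilde c(\lambda, |s|) = 0$. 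Moreover hypothesis \eqref{BT:SA} forces $\langle v, \gamma(\lambda, s)\rangle = \langle \psi, F(\lambda, \psi)\rangle/\bar{s} \cdot(\ldots)$ — more cleanly, $\langle sv + w, F(\lambda, sv+w)\rangle \in \R$ together with $Pw=0$ and orthogonality shows $\bar{s}\,c(\lambda, s)\|v\|^2 \in \R$, i.e. $\tilde c(\lambda, |s|) \in \R$ after absorbing $\|v\|^2$. So in fact $\tilde c$ is a real-analytic real-valued function of the two real variables $(\lambda, t)$ with $t = |s| \ge 0$, and since $c(\lambda, s)$ vanishes to order $\ge 1$ in $s$ at $s = 0$ one can check $\tilde c(\lambda, t) = t\, h(\lambda, t^2)$ with $h$ real-analytic; the nontrivial solutions near $s = 0$ are the zeros of $h(\lambda, t^2) = 0$.

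Now I would solve $h(\lambda, u) = 0$ for $\lambda$ in terms of $u = t^2 = |s|^2$ near $(\lambda_0, 0)$ by the Implicit Function Theorem once more. One computes $h(\lambda_0, 0) = 0$ (since $(\lambda_0,0)$ is on the trivial branch in the limit) and $\partial_\lambda h(\lambda_0, 0)$ is, up to a nonzero constant, $\langle v, D_{\lambda,\psi}F(\lambda_0,0)v\rangle$, which is nonzero by hypothesis \eqref{BT:v-star}; this is the place where that hypothesis is essential. Hence there is a unique real-analytic $\phi_\lambda : [0,\epsilon) \to \R$ with $\phi_\lambda(0) = \lambda_0$ and $h(\phi_\lambda(u), u) = 0$. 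Setting $\phi_\psi(u) := w(\phi_\lambda(u), s)/s$ — which is well-defined and real-analytic in $u = |s|^2$ because $w(\lambda, s) = s\,\hat w(\lambda, |s|^2)$ by the same symmetry/reality reduction applied to $w$ — gives the stated form $\psi = sv + s\phi_\psi(|s|^2)$, $\lambda = \phi_\lambda(|s|^2)$, with $\phi_\psi(0) = 0$. Uniqueness in a neighbourhood of $(\lambda_0,0)$ follows from the uniqueness at each step (IFT for $w$, and IFT for $\phi_\lambda$).

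The main obstacle is the bookkeeping that turns the abstract scalar equation $\gamma(\lambda, s) = 0$ into a clean real-analytic equation in the two \emph{real} variables $(\lambda, |s|^2)$: one must combine the $U(1)$-equivariance (to factor out $s/|s|$) with the reality condition \eqref{BT:SA} (to see the remaining factor is real and even in $|s|$) and argue that the resulting function is genuinely real-analytic in $|s|^2$ across $s = 0$, not merely for $s \neq 0$. Once that reduction is in place, the two applications of the analytic Implicit Function Theorem are routine, and hypothesis \eqref{BT:v-star} is exactly the nondegeneracy needed for the last one.
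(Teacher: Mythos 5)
Your proposal is correct and follows essentially the same route as the paper's proof: a Lyapunov--Schmidt reduction, use of the $U(1)$-equivariance together with the reality hypothesis \eqref{BT:SA} to reduce the bifurcation equation to a real-analytic scalar equation in $(\lambda,|s|)$, division by one power of $|s|$, and an application of the analytic Implicit Function Theorem whose nondegeneracy is exactly hypothesis \eqref{BT:v-star}. The only cosmetic difference is that you extract the even dependence on $|s|$ at the level of the reduced function $h(\lambda,t^2)$, whereas the paper first solves $t^{-1}\gamma_0(\lambda,t)=0$ for $\widetilde{\phi}_\lambda(t)$ and then observes that $\widetilde{\phi}_\lambda$ is even, hence a function of $t^2$; these are equivalent.
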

\begin{proof}
%
%
%
\DETAILS{\vspace{1pc}\noindent Step 1. \emph{Lyapunov-Schmidt reduction}:\vspace{1pc}

    The first step is to reduce the equation $F(\lambda,u) = 0$ to an equation on a finite-dimensional subspace.
    Set $L := D_u F(\lambda_0, 0)$ and let $K = \Null L$. We let $P$ be the Riesz projection onto $K$, that is,
    \begin{equation}
        P := -\frac{1}{2\pi i} \oint_\gamma (L - z)^{-1} \,dz,
    \end{equation}
    where $\gamma \subseteq \C$ is a contour around $0$ that contains no other points of the spectrum of $L$.
    This is possible since $0$ is an isolated eigenvalue of $L$. $P$ is a bounded projection, and if we
    let $Z := \Null P$, then $Y = K \oplus Z$. We also let $Q := I - P$, and so $Q$ is a projection onto $Z$.

    The equation $F(\lambda,u) = 0$ is therefore equivalent to the pair of equations
    \begin{align}
        \label{BT:eqn1} &P F(\lambda, Pu + Qu) = 0, \\
        \label{BT:eqn2} &Q F(\lambda, Pu + Qu) = 0.
    \end{align}

    We will now solve \eqref{BT:eqn2} for $w = Qu$ in terms of $\lambda$ and $v = Pu$. To do this, we introduce the map
    $G : \R \times K \times Z \to Z$ to be $G(\lambda, v, w) := QF(\lambda, v + w)$.    Applying the Implicit Function Theorem
    to $G$, we obtain a real-analytic function $w : \R \times K \to Z$, defined on a neighbourhood of $(\lambda_0, 0)$,
    such that $G(\lambda, v, w) = 0$ for $(\lambda, v)$ in that neighbourhood if and only if $w = w(\lambda, v)$.

    We substitute this function into \eqref{BT:eqn2} and see that we are looking for $(\lambda, u)$ in a neighbourhood of $(\lambda_0, 0)$ that
    satisfy $PF(\lambda, Pu + w(\lambda, Pu)) = 0$. We therefore define the function $\beta : \R \times K \to \C$ to be
    \begin{equation}
    \label{BT:beta}
        \beta(\lambda, v) = PF(\lambda, v + w(\lambda, v)).
    \end{equation}
    We have shown that in a neighbourhood of $(\lambda_0, 0)$ in $\R \times X$, $(\lambda, u)$ solves $F(\lambda, u) = 0$
    if and only if $(\lambda, v)$, with $v = Pu$, solves the \emph{bifurcation equation}
    \begin{equation}
    \label{BT:bif-eqn}
        \beta(\lambda, v) = 0.
    \end{equation}

\vspace{1pc}\noindent Step 2. \emph{Solving the bifurcation equation ($n = 1$)}:\vspace{1pc}}
The analysis of Section \ref{sec:bifurcation-analysis n=1} reduces the problem to the one of solving 
the bifurcation equation \eqref{BT:bif-eqn}. Since the projection $P$, defined there, is rank one and self-adjoint, we have
    \begin{equation}
    \label{BT:P}
        P\psi = \frac{1}{\|v\|^2} \langle v, \psi \rangle v,\ \textrm{with}\ v \in \Null D_\psi F(\lambda_0, 0).
    \end{equation}
	We can therefore view the function $\gamma$ in the bifurcation equation \eqref{BT:bif-eqn} as a map $\gamma : \R \times \C \to \C$, where
	\begin{equation*}
		\gamma(\lambda, s) = \langle v, F(\lambda, sv_0 + w(\lambda, sv) \rangle.
	\end{equation*}
    We now look for non-trivial solutions of this equation, by using the Implicit Function Theorem to solve for $\lambda$ in terms of $s$.
	Note that if $\gamma(\lambda, t) = 0$, then $\gamma(\lambda, e^{i\alpha}t) = 0$ for all $\alpha$, and conversely, if $\gamma(\lambda, s) = 0$, then $\gamma(\lambda, |s|) = 0$. So we need only to find solutions of $\gamma(\lambda, t) = 0$ for $t \in \R$. We now show that $\gamma(\lambda, t) \in \R$.
	Since the projection $Q$ is self-adjoint, and since $Qw(\lambda, v) = w(\lambda, v)$ we have
	\begin{align*}
		\langle w(\lambda, tv), F(\lambda, tv+ w(\lambda, tv) \rangle
		= \langle w(\lambda, tv), QF(\lambda, tv + w(\lambda, tv) \rangle
		= 0.
	\end{align*}
	Therefore, for $t \neq 0$,
	\begin{equation*}
		\langle v, F(\lambda, tv + \Phi(\lambda, tv)) \rangle
		= t^{-1}\langle tv + w(\lambda, tv), F(\lambda, tv + w(\lambda, tv)) \rangle,
	\end{equation*}
	and this is real by condition \eqref{BT:SA} of the theorem. Thus we can restrict $\gamma$ to a function $\gamma_0 : \R \times \R \to \R$.

    By a standard application of  the Implicit Function Theorem to $t^{-1}\gamma_0(\lambda, t) = 0$, in which \eqref{BT:trivial}-\eqref{BT:v-star} are used (see for example \cite{AmPr}), there is $\epsilon > 0$ and a real-analytic function
    $\widetilde{\phi}_\lambda : (-\epsilon, \epsilon) \to \R$ such that $\widetilde{\phi}_\lambda(0) = \lambda_0$ and if $\gamma_0(\lambda, t) = 0$ with $|t| < \epsilon$,
    then either $t = 0$ or $\lambda = \phi_\lambda(t)$.     Recalling that $\gamma(\lambda, e^{i\alpha}t) = e^{i\alpha}\gamma(\lambda,t)$, we have shown that
    if $\gamma(\lambda, s) = 0$ and $|s| < \epsilon$, then either $s = 0$ or $\lambda = \phi_\lambda(|s|)$.

    We also note that by the symmetry, $\widetilde{\phi}_\lambda(-t) = \widetilde{\phi}_\lambda(|t|) = \widetilde{\phi}_\lambda(t)$, so $\widetilde{\phi}_\lambda$ is an even real-analytic
    function, and therefore must in fact be a function solely of $|t|^2$. We therefore set $\phi_\lambda(t) = \widetilde{\phi}_\lambda(\sqrt{t})$, and
    so $\phi_\lambda$ is real-analytic.

    We now define $\phi_\psi : (-\epsilon, \epsilon) \to \R$ to be
    \begin{equation}
        \phi_\psi(t) = \begin{cases}
                        t^{-1} w(\phi_\lambda(t), tv) &  t \neq 0, \\
                        0 & t = 0,
                    \end{cases}
    \end{equation}
    $\phi_\psi$ is also real-analytic and satisfies $s\phi_\psi(|s|^2) = w(\phi_\lambda(|s|^2), sv)$ for any $s \in \C$ with $|s|^2 < \epsilon$.

    Now we know that there is a neighbourhood of $(\lambda_0, 0)$ in $\R \times \Null D_\psi F(\lambda_0, 0)$ such that in that neighbourhood $F(\lambda, \psi) = 0$
    if and only if $\gamma(\lambda, s) = 0$ where $P\psi = sv$. By taking a smaller neighbourhood if necessary, we have proven that
    $F(\lambda, \psi) = 0$ in that neighbourhood if and only if either $s = 0$ or $\lambda = \phi_\lambda(|s|^2)$. If $s = 0$, we have
    $\psi = sv + s\phi_\psi(|s|^2) = 0$ which gives the trivial solution. In the other case, $\psi = sv + s\phi_\psi(|s|^2)$ and that completes the proof
    of the theorem.
\end{proof}


\section{The Operators $L$ and $M$}
\label{sec:operators}

In this appendix we prove Theorems \ref{thm:M-theorem} and
\ref{thm:L-theorem}. The proofs below are standard.

\begin{proof}[Proof of Theorem \ref{thm:M-theorem}]
    The fact that $M$ is positive follows immediately from its definition. We note that its being strictly positive is the result of
    restricting its domain to elements having mean zero.
\end{proof}

\begin{proof}[Proof of Theorem \ref{thm:L-theorem}]
    First, we note that  $L^n$ is clearly a positive self-adjoint operator. To see that it has discrete spectrum, we first note that the inclusion
    $H^2 \hookrightarrow L^2$ is compact for bounded domains in $\R^2$ with Lipschitz boundary (which certainly includes lattice cells). Then for any
    $z$ in the resolvent set of $L^n$, $(L^n - z)^{-1} : L^2 \to H^2$ is bounded and therefore $(L^n - z)^{-1} : L^2 \to L^2$ is compact.

    In fact we find the spectrum of  $L^n$ explicitly. We introduce the harmonic oscillator creation and annihilation operators 
    \begin{equation}
        L^n_\pm = \partial_{x_1} \mp i \partial_{x_2} \mp \frac{n}{2}x_1 + \frac{in}{2}x_2.
    \end{equation}
    One can verify that these operators satisfy the following.
    \begin{enumerate}
    \item $[L^n_+, L^n_-] = 2n$.
    \item $L^n - n = -L^n_+L^n_-$.
    \item $L^n - n = -L^n_-L^n_+$.
    \end{enumerate}
    As for the harmonic oscillator (see for example \cite{GS2}), this gives the explicit information about $\sigma(L)$ as stated in the theorem.

    For the dimension of the null space of $L$, we need the following lemma.
    \begin{lemma}
        $\Null (L^n - n) = \Null L^n_-$.
    \end{lemma}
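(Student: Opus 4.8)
The plan is to exploit the factorization $L^n - n = -L^n_+ L^n_-$ together with the fact that, on the quasi-periodic space $\Hpsi{2}{n}$, the operators $L^n_+$ and $L^n_-$ are the negatives of each other's adjoints. One inclusion is then immediate: if $L^n_-\psi = 0$, then $(L^n-n)\psi = -L^n_+ L^n_-\psi = 0$, so $\Null L^n_- \subseteq \Null(L^n - n)$.

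For the reverse inclusion, suppose $(L^n - n)\psi = 0$. First I would record the adjoint relation $(L^n_+)^* = -L^n_-$, which is a direct computation from the explicit formula $L^n_\pm = \partial_{x_1} \mp i\partial_{x_2} \mp \frac n2 x_1 + \frac{in}{2}x_2$: one uses $\partial_{x_j}^* = -\partial_{x_j}$, that multiplication by a real function is self-adjoint, and that multiplication by $i$ times a real function is anti-self-adjoint, the boundary terms from the integration by parts over a fundamental cell $\Omega^\tau$ vanishing because the relevant bilinear expressions in $\psi$ and $\bar\psi$ are genuinely $\cL^\tau$-periodic (this is where the quasi-periodic boundary condition $\psi(x+t) = e^{\frac{in}{2}t\wedge x}\psi(x)$ enters). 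Then
\[
0 = \langle \psi, (L^n - n)\psi\rangle = \langle \psi, -L^n_+ L^n_-\psi\rangle = -\langle (L^n_+)^*\psi,\, L^n_-\psi\rangle = \|L^n_-\psi\|^2,
\]
so $L^n_-\psi = 0$, i.e. $\psi \in \Null L^n_-$. This gives $\Null(L^n - n) \subseteq \Null L^n_-$, completing the proof.

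The only point needing genuine care — the "main obstacle", although routine — is the justification of the adjoint identity $(L^n_+)^* = -L^n_-$ on the correct domain: one must check that $L^n_\pm$ carry $\Hpsi{2}{n}$ into $\Hpsi{2}{n}$ (a first-order operator with these coefficients preserves the quasi-periodicity factor $e^{\frac{in}{2}t\wedge x}$) and that no boundary contribution survives the integration by parts over $\Omega^\tau$. Granting this, the lemma follows, and, combined with the harmonic-oscillator computation of $\sigma(L^n)$ already carried out, it reduces the determination of $\dim_\C\Null(L^n - n)$ to counting the quasi-periodic solutions of the first-order equation $L^n_-\psi = 0$, which is what the rest of the appendix does.
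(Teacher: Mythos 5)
Your argument is correct and is essentially the paper's own proof: both use the factorization $L^n - n = -L^n_+L^n_-$ for the easy inclusion and the identity $\|L^n_-\psi\|^2 = \langle \psi, -L^n_+L^n_-\psi\rangle = \langle\psi,(L^n-n)\psi\rangle$ for the reverse one. Your added care about the adjoint relation $(L^n_+)^* = -L^n_-$ and the vanishing of boundary terms under the quasi-periodic boundary conditions is a reasonable elaboration of a step the paper leaves implicit.
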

    \begin{proof}
        If $L^n_-\psi = 0$, we immediately have $(L^n - n)\psi = -L^n_+L^n_-\psi = 0$. For the reverse inclusion we use the fact that
        $\| L^n_-\psi \|^2 = \langle L^n_-\psi, L^n_-\psi \rangle = \langle \psi, -L^n_+L^n_-\psi \rangle = \langle \psi, (L^n - n)\psi \rangle$.
    \end{proof}

    We can now prove the following.
    \begin{proposition} $ \Null L^n - n$ is given by 
     \begin{equation} \label{nullL} \Null (L^n - n) =
     		\{\, e^{\frac{in}{2}x_2(x_1 + ix_2) }\sum_{k=-\infty}^{\infty} c_k e^{ki\sqrt{2\pi \Im \tau}(x_1 + ix_2)}\ |\ c_{k + n} = e^{in\pi\tau} e^{2ki\pi\tau} c_k \}.
        \end{equation}
     and therefore, in particular,
        $\dim_\C \Null L^n_- = n$.
    \end{proposition}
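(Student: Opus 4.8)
By the preceding lemma it suffices to identify $\Null L^n_-$. The plan is: (1) conjugate away the Gaussian factor so that $L^n_-\psi=0$ becomes the Cauchy--Riemann equation; (2) feed the resulting ``lowest Landau level'' ansatz into the quasiperiodicity condition \eqref{reduced-quasiperiodic-bc} for the first period to get a Fourier expansion; (3) feed it into the condition for the second period to get the recursion $c_{k+n}=e^{in\pi\tau}e^{2ki\pi\tau}c_k$; (4) check convergence of the resulting theta-type series and count dimensions.

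\textbf{Step 1 (de-gauging to $\bar\partial$).} Identify $x\in\R^2$ with $z=x_1+ix_2$ and note $\tfrac n2 x_1+\tfrac{in}2 x_2=\tfrac n2 z$, so $L^n_-=2\partial_{\bar z}+\tfrac n2 z$. Since $2\partial_{\bar z}e^{-\frac n4|z|^2}=-\tfrac n2 z\,e^{-\frac n4|z|^2}$, conjugation gives $L^n_-\psi=0\iff\partial_{\bar z}\bigl(e^{\frac n4|z|^2}\psi\bigr)=0$. A solution in $\mathscr H_n(\tau)$ extends to $H^2_{\mathrm{loc}}(\R^2)$ via the (cocycle-consistent) relation \eqref{reduced-quasiperiodic-bc}, and $L^n_-$ commutes with these magnetic translations, so $L^n_-\psi=0$ on all of $\R^2$; then $h:=e^{\frac n4|z|^2}\psi$ is entire by Weyl's lemma. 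Using the algebraic identity $e^{-\frac n4|z|^2}=e^{-\frac n4 z^2}\,e^{\frac{in}2 x_2 z}$ we may write $\psi=e^{\frac{in}{2}x_2(x_1+ix_2)}g(z)$ with $g:=e^{-\frac n4 z^2}h$ entire --- this is exactly the prefactor in \eqref{nullL}.

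\textbf{Steps 2--3 (periods $\Rightarrow$ recursion).} Substitute $\psi=e^{\frac{in}{2}x_2 z}g(z)$ into $\psi(x+t)=e^{\frac{in}{2}t\wedge x}\psi(x)$, writing $t\wedge x=\operatorname{Im}(\bar t z)$ for $t$ regarded as a complex number. Expanding $\operatorname{Im}(z+t)(z+t)$ against $\operatorname{Im}(\bar t z)$, the Gaussian-type terms recombine and one obtains $g(z+t)=\exp\!\bigl(-in(\operatorname{Im} t)\,z-\tfrac{in}2 t\,(\operatorname{Im} t)\bigr)g(z)$. For $t=r^\tau$ (real) this reads $g(z+r^\tau)=g(z)$, and since $r^\tau=(2\pi/\Im\tau)^{1/2}$ we get a Fourier expansion $g(z)=\sum_{k\in\Z}c_k e^{ik\sqrt{2\pi\Im\tau}\,z}$. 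For $t=r^\tau\tau$ one has $\operatorname{Im} t=r^\tau\Im\tau=\sqrt{2\pi\Im\tau}$ and $r^\tau\sqrt{2\pi\Im\tau}=2\pi$, so the functional equation becomes $g(z+r^\tau\tau)=e^{-in\sqrt{2\pi\Im\tau}\,z}e^{-in\pi\tau}g(z)$; inserting the Fourier series on both sides and matching the coefficient of $e^{ik\sqrt{2\pi\Im\tau}\,z}$ gives precisely $c_{k+n}=e^{in\pi\tau}e^{2ki\pi\tau}c_k$. This proves the inclusion ``$\subseteq$'' in \eqref{nullL}.

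\textbf{Step 4 (convergence and dimension).} Conversely, prescribe $(c_0,\dots,c_{n-1})\in\C^n$ and define the remaining $c_k$ by the recursion; iterating yields $|c_{mn+r}|=|c_r|\exp\!\bigl(-\pi\Im\tau\,(nm^2+2rm)\bigr)$, which decays super-exponentially as $m\to\pm\infty$ (here $\Im\tau>0$ is used), so $g(z)=\sum_k c_k e^{ik\sqrt{2\pi\Im\tau}\,z}$ converges locally uniformly to an entire, $r^\tau$-periodic function satisfying the functional equation of Step~3; hence $\psi=e^{\frac{in}2 x_2 z}g(z)$ is smooth, lies in $\mathscr H_n(\tau)$, and $L^n_-\psi=0$. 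Thus $(c_0,\dots,c_{n-1})\mapsto\psi$ is a $\C$-linear bijection onto $\Null(L^n-n)$, giving $\dim_\C\Null(L^n-n)=\dim_\C\Null L^n_-=n$.

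\textbf{Main obstacle.} The only delicate point is the bookkeeping in Steps~2--3: one must track the quadratic term $\operatorname{Im}(z+t)(z+t)$ against $\tfrac{in}{2}\,t\wedge x$ carefully enough to see that every non-holomorphic ($\bar z$) contribution cancels and that the constant comes out as $e^{-in\pi\tau}$ with the normalization fixed by \eqref{eq:rtau}; Steps~1 and~4 are the standard lowest-Landau-level computation and a routine theta-series estimate.
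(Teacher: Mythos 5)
Your proof is correct and follows essentially the same route as the paper's: conjugating by the Gaussian $e^{\frac{n}{4}|z|^2}$ to reduce $L^n_-\psi=0$ to holomorphicity, transferring the quasiperiodicity of $\psi$ to a theta-function functional equation for the entire factor, and reading off the recursion $c_{k+n}=e^{in\pi\tau}e^{2ki\pi\tau}c_k$ from the Fourier coefficients. The only differences are cosmetic (you work at the natural period $r^\tau$ rather than rescaling to period $\pi$ as the paper does), plus your extra care on the converse direction --- the convergence estimate in Step 4 --- which the paper leaves implicit.
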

    \begin{proof}
        A simple calculation gives the following operator equation
        \begin{equation*}
            e^{\frac{n}{4}|x|^2}L^n_-e^{-\frac{n}{4}|x|^2} = \partial_{x_1} + i\partial_{x_2}.
        \end{equation*}
        This immediately proves that $\psi \in \Null L^n_-$ if and only if $\xi = e^{\frac{n}{4}|x|^2}\psi$ satisfies $\partial_{x_1}\xi + i\partial_{x_2}\xi = 0$.
        We now identify $x \in \R^2$ with $z = x^1 + ix^2 \in \C$ and see that this means that $\xi$ is analytic. We therefore define the entire function
        $\Theta$ to be
        \begin{equation*}
            \Theta(z) = e^{ -\frac{n (r^\tau)^2}{4\pi^2}z^2 } \xi \left( \frac{r^\tau z}{\pi} \right).
        \end{equation*}
         The quasiperiodicity of $\psi$ transfers to $\Theta$ as follows.
        \begin{subequations}
            \begin{equation*}
                \Theta(z + \pi) = \Theta(z),
            \end{equation*}
            \begin{equation*}
                \Theta(z + \pi\tau) =  e^{ -2inz } e^{ -in\pi\tau z } \Theta(z).
            \end{equation*}
        \end{subequations}

        To complete the proof, we now need to show that the space of the analytic functions which satsify these relations form a vector space
        of dimension $n$. It is easy to verify that the first relation ensures that $\Theta$ have a absolutely convergent Fourier expansion of the form
        \begin{align*}
            \Theta(z) = \sum_{k=-\infty}^{\infty} c_k e^{2kiz}.
        \end{align*}
        The second relation, on the other hand, leads to relation for the coefficients of the expansion. Namely, we have
        \begin{align*}
            c_{k + n} = e^{in\pi\tau} e^{2ki\pi\tau} c_k
        \end{align*}
        And that means such functions are determined solely by the values of $c_0,\ldots,c_{n-1}$ and therefore form an $n$-dimensional vector space.
    \end{proof}
    This completes the proof of Theorem \ref{thm:L-theorem}.
\end{proof}


\section{Fixing the Gauge}
\label{sec:alternate-proof}

We provide here an alternate proof of Proposition \ref{thm:fix-gauge}, largely based on ideas in \cite{Eil}. We begin by defining the function $B : \R \to \R$ to be
	\begin{align*}
		B(\zeta) = \frac{1}{r} \int_{0}^{r} \Curl A(\xi, \zeta)\,d\xi.
	\end{align*}
	It is clear that $b = \frac{1}{r\tau_2} \int_{0}^{r\tau_2} B(\zeta) \,d\zeta$. A calculation shows that $B(\zeta + r\tau_2) = B(\zeta)$.
	

	We now define $P = (P_1, P_2) : \R^2 \to \R^2$ to be
	\begin{align*}
	& P_1(x) = bx_2 - \int_{0}^{x_2} B(\zeta) \,d\zeta, \\
	& P_2(x) = \int_{\frac{\tau_1}{\tau_2} x_2}^{x_1} \Curl A(\xi, x_2)\,d\xi + \frac{\tau \wedge x}{\tau_2} B(x_2).
	\end{align*}
	A calculation shows that $P$ is doubly-periodic with respect to $\Lat$.

		
	We now define $\eta' : \R^2 \to \R$ to be
	\begin{align*}
		\eta'(x) = \frac{b}{2}x_1x_2 - \int_{0}^{x_1} A_1(\xi, 0) \,d\xi - \int_{0}^{x_2} A_2(x_1, \zeta) - P_2(x_1, \zeta) \,d\zeta.
	\end{align*}
	$\eta'$ satisfies
	\begin{align*}
		\nabla\eta = - A + A_0 + P.
	\end{align*}	

	Now let $\eta''$ be a doubly-periodic solution of the equation $\Delta\eta'' = -\Div P$. Also let $C = (C_1, C_2)$ be given by
	\begin{equation*}
		C = - \frac{1}{|\Omega|} \int_\Omega P + \nabla\eta \,dx,
	\end{equation*}
	where $\Omega$ is any fundamental cell, and set $\eta''' = C_1x_1 + C_2x_2$.
	
	We claim that $\eta = \eta' + \eta'' + \eta'''$ is such that $A + \nabla\eta$ satisfies (i) - (iii) of the proposition. We first note that $A + \nabla\eta = A - A + A_0 + P + \nabla\eta'' + C$. By the above, $A' = P + \nabla\eta'' + C$ is periodic. We also calculate that $\Div A' = \Div P + \Delta\eta'' = 0$. Finally $\int A' = \int P + \nabla\eta - C = 0$.
	
	All that remains is to prove (iv). This will follow from a gauge transformation and translation of the state. We note that
	\begin{equation*}
		A_0(x + t) + A'(x + t) = A_0(x) + A'(x) + \frac{b}{2} \left( \begin{array}{c} -t_2 \\ t_1 \end{array} \right).
	\end{equation*}
	This means that $A_0(x + t) + A'(x + t) = A_0(x) + A'(x) + \nabla g_t(x)$, where $g_t(x) = \frac{b}{2} t \wedge x + C_t$ for some constant $C_t$. To establish (iv), we need to have it so that $C_t = 0$ for $t = r$, $r\tau$. First let $l$ be such that $r \wedge l = -\frac{C_r}{b}$ and $r\tau \wedge l = -\frac{C_{r\tau}}{b}$. This $l$ exists as it is the solution to the matrix equation
	\begin{equation*}
		\left( \begin{array}{cc} 0 & r \\ -r\tau_2 & r\tau_1 \end{array} \right)
			\left( \begin{array}{c} l_1 \\ l_2 \end{array} \right) =
			\left( \begin{array}{c} -\frac{C_r}{b} \\ -\frac{C_{r\tau}}{b} \end{array} \right),
	\end{equation*}
	and the determinant of the matrix is just $r^2\tau_2$, which is non-zero because $(r,0)$ and $r\tau$ form a basis of the lattice. Let $\zeta(x) = \frac{b}{2} l \wedge x$. A straight forward calculation then shows that $e^{i\zeta(x)}\psi(x + l)$ satisfies (iv) and that $A(x + l) + \nabla\zeta(x)$ still satisfies (i) through (iii). This proves the proposition.



\end{document}